\newtheorem{theorem}{Theorem}
\newtheorem{definition}{Definition}
\def\ie{i.e.,\xspace}
\newcommand{\calc}{w}    % calc for  stage
\newcommand{\comm}{\mathsf{\delta}} % comm for stage
\newcommand{\speed}{s}   % speed of proc
\newcommand{\bwprocs}{Bs}      % bandwidth of server card
\newcommand{\bwprocp}{Bp}      % bandwidth of processor card
\newcommand{\bwlinks}{bs}      % bandwidth of link between
\newcommand{\bwlinkp}{bp}      % bandwidth of link between 2 procs
\newcommand{\dbrate}{rate} % object rate
\newcommand{\dbsize}{\comm} % object size
\newcommand{\dbfreq}{f}    % object update frequency
\newcommand{\obj}{o}
\newcommand{\NN}{\mathcal{N}} % internal nodes
\newcommand{\PP}{\mathcal{P}} % processors
\newcommand{\OO}{\mathcal{O}} % objects
\newcommand{\CC}{\mathcal{C}} % objects
\renewcommand{\SS}{\mathcal{S}} % object servers
\newcommand{\RR}{\mathcal{R}} % resources
\newcommand{\Leaf}{\mathit{Leaf}}
\newcommand{\Child}{\mathit{Child}}
\newcommand{\Parent}{\mathit{Parent}}
\newcommand{\allocn}{a} %lloc}_{\textsf{n}}}
\newcommand{\allocni}{\bar{a}} %lloc}_{\textsf{n}}}
\newcommand{\download}{download}
\newcommand{\CONS}{\textsc{Constr}\xspace}
\newcommand{\CONSHOM}{\textsc{Constr-Hom}\xspace}
\newcommand{\CONSLAN}{\textsc{Constr-LAN}\xspace}
\newcommand{\NCONSHET}{\textsc{Non-Constr-Het}\xspace}
\newcommand{\NCONSHOM}{\textsc{Non-Constr-Hom}\xspace}
\newcommand{\NCONSLAN}{\textsc{Non-Constr-LAN}\xspace}
\newcommand{\NCONS}{\textsc{Non-Constr}\xspace}
\newcommand{\LDT}{\textsc{LDTree}\xspace}
\newcommand{\HOMA}{\textsc{HomA}\xspace}
\newcommand{\HOMS}{\textsc{HomS}\xspace}
\newcommand{\NOCOMA}{\textsc{NoComA}\xspace}
\newcommand{\CLDTHom}{\textsc{C-LDT-Hom}\xspace}
\newcommand{\CLDTHomDec}{\textsc{C-LDT-Hom-Dec}\xspace}
\newif\ifremark
\long\def\remark#1{
\ifremark%
        \begingroup%
        \dimen0=\columnwidth
        \advance\dimen0 by -1in%
        \setbox0=\hbox{\parbox[b]{\dimen0}{\protect\em #1}}
        \dimen1=\ht0\advance\dimen1 by 2pt%
        \dimen2=\dp0\advance\dimen2 by 2pt%
        \vskip 0.25pt%
        \hbox to \columnwidth{%
                \vrule height\dimen1 width 3pt depth\dimen2%
                \hss\copy0\hss%
                \vrule height\dimen1 width 3pt depth\dimen2%
        }%
        \endgroup%
\fi}
\begin{document}
\RRNo{6578}

\RRInumber{2008-20}

\RRItitle{Resource Allocation Strategies for \\  In-Network
  Stream Processing}

\RRItitre{Stratégies d'allocation de resources pour le traitement de flux en r\'eseaux}

\RRIthead{In-Network Stream Processing}
\RRIahead{A. Benoit \and H. Casanova \and V. Rehn-Sonigo\and Y. Robert}

\RRIauthor{Anne Benoit \and Henri Casanova \and Veronika Rehn-Sonigo \and Yves Robert}

\RRIdate{July 2008}

\RRIkeywords{in-network stream processing, trees of operators,
operator mapping, optimization, complexity results, polynomial
heuristics.}

\RRImotscles{traitement de flux en réseau, arbres d'opérateurs,
placement d'opérateurs, optimisation, résultats de complexité,
heuristiques polynomiales.}

%\doublespacing
%\maketitle

%\smallskip
\RRIabstract{In this paper we consider the operator mapping problem for in-network
stream processing applications. In-network stream processing consists in
applying a tree of operators in steady-state to multiple data objects that
are continually updated at various locations on a network. Examples of
in-network stream processing include the processing of data in a sensor
network, or of continuous queries on distributed relational databases.
We study the operator mapping problem in a ``constructive'' scenario,
i.e., a scenario in which one builds a platform dedicated to the application
buy purchasing processing servers with various costs and capabilities. The
objective is to minimize the cost of the platform while ensuring that the
application achieves a minimum steady-state throughput.

The first contribution of this paper is the formalization of a set of
relevant operator-placement problems as linear programs, and a proof that
even simple versions of the problem are NP-complete.  Our second
contribution is the design of several polynomial time heuristics, which are
evaluated via extensive simulations and compared to theoretical bounds for
optimal solutions. }

\RRIresume{Dans ce travail nous nous intéressons au problème de
  placement des applications de traitement de flux en
  réseau. Ce probl\`eme consiste \`a appliquer en régime permanent un arbre d'opérateurs \`a des
   données multiples qui sont mise \`a jour en permanence dans les
  différents emplacements du réseau. Le traitement de données
  dans les réseaux de détecteurs ou le traitement de requêtes dans les
  bases de données relationnelles sont des exemples d'application.
  Nous étudions le placement des
  opérateurs dans un scénario ``constructif'', i.e., un scénario dans
  lequel la plate-forme pour l'application est construite au fur et \`a
  mesure en achetant des serveurs de calcul ayant un vaste choix de
  coûts et de capacités. L'objectif est la minimisation du coût de la
  plate-forme en garantissant que l'application atteint un débit
  minimal en régime permanent.

La première contribution de cet article est la formalisation d'un
ensemble pertinent de problèmes opérateur-placement sous forme d'un
programme linéaire ainsi qu'une preuve que même les instances simples
du problème sont NP-compl\`etes. La deuxième contribution est la
conception de plusieures heuristiques polynomiales qui sont évaluées a
l'aide de simulations extensives et comparées aux bornes théoriques
pour des solutions optimales.  }

\RRItheme{\THNum}
\RRIprojet{GRAAL}

\RRImaketitle

\section{Introduction}
\label{sec.intro}

In this paper we consider the execution of applications structured as
trees of operators. The leaves of the tree correspond to basic data objects that are
spread over different servers in a
distributed network.  Each internal node in the tree denotes the
aggregation and combination of the data from its children, which in turn
generate new data that is used by the node's parent. The computation is
complete when all operators have been applied up to the root node, thereby
producing a final result.  We consider the scenario in which the basic data
objects are constantly being updated, meaning that the tree of operators
must be applied continuously. The goal is to produce final results at some
desired rate.

The above problem, which is called \emph{stream
processing}~\cite{badcock_VLDB_2004}, arises in several domains.  An important
domain of application is the acquisition and refinement of data from a set
of sensors~\cite{srivastava_PODS2005, madden_ICMD_2003, bonnet_CMDB_2001}.
For instance, \cite{srivastava_PODS2005} outlines a video surveillance
application in which the sensors are cameras located in different locations
over a geographical area. The goal of the application could be to show an
operator monitored area in which there is significant motion between
frames, particular lighting conditions, and correlations between the
monitored areas.  This can be achieved by applying several operators
(filters, image processing algorithms) to the raw images, which are
produced/updated periodically.  Another example arises in the area of
network monitoring~\cite{cranor_ICMD_2002,vanRennesse_IPTPS_2002,cooke_USENIX_2006}.
In this case the sources of data are routers that produce streams of data
pertaining to packets forwarded by the routers.  One can often view stream
processing as the execution of one of more ``continuous queries'' in the
relational database sense of the term (e.g., a tree of join and select
operators). A continuous query is applied continuously, i.e., at a
reasonably fast rate, and returns results based on recent data generated by
the data streams. Many authors have studies the execution of continuous
queries on data streams
~\cite{Babu_SIGMODRECORD_2001,liu_tkde1999,chen02design,Plale_TPDD_2003,kramer_COMAD05}.

In practice, the execution of the operators on the data streams must be distributed over the
network.  In some cases, for instance in the aforementioned video
surveillance application, the cameras that produce the basic objects do not
have the computational capability to apply any operator effectively. Even
if the servers responsible for the basic objects have sufficient
capabilities, these objects must be combined across devices, thus requiring
network communication.  A simple solution is to send all basic objects to a
central compute server, but it proves unscalable for many applications due
to network bottlenecks. Also, this central server may not be able to meet
the desired target rate for producing results due to the sheer amount of
computation involved.  The alternative is then to distribute the execution
by mapping each node in the operator tree to one or more compute servers in
the network (which may be distinct or co-located with the devices that
produce/store and update the basic objects).  One then talks of
\emph{in-network} stream processing. Several in-network stream processing
systems have been
developed~\cite{abadi2005design,medusa,pier,gates,nath-irisnet,vanRennesse_IPTPS_2002,NiagaraCQ,MORTAR}.
These systems all face the same question: to which servers should one map
which operators?

%% \remark{Henri: We should talk about~\cite{srivastava_PODS2005, Pietzuch_ICDE06,ahmad_VLDB_2004} probably,
%% and not only of~\cite{Pietzuch_ICDE06}. We can do this once we've written the related work section. In this case the paragraph below is probably a little bit too detailed (i.e., we can't have a paragraph 3 times as long for these three works, which seem very related to ours... although I haven't read them in detail yet!!) }
In this paper we address the operator-mapping problem for in-network stream
processing.  This problem was studied in~\cite{srivastava_PODS2005,
Pietzuch_ICDE06,ahmad_VLDB_2004}.  The work in~\cite{Pietzuch_ICDE06}
studied the problem for an ad-hoc objective function that trades off
application delay and network bandwidth consumption. In this paper we study
a more general objective function. We first enforce the constraint that the
rate at which final results are produced, or \emph{throughput}, is above a
given threshold.  This corresponds to a Quality of Service (QoS)
requirement of the application, which is almost always desirable in
practice (e.g., up-to-date results of continuous queries must be available at
a given frequency).  Our objective is to meet this constraint while
minimizing the ``overall cost'', that is the amount of resources used to
achieve the throughput. For instance, the cost could be simply the total
number of compute servers, in the case when all servers are identical and
network bandwidth is assumed to be free.

We study several variations of the operator-mapping problem.  Note that in
all cases basic objects may be replicated at multiple locations, i.e.,
available and updated at these locations.  In terms of the computing platform
one can consider two main scenarios. In the first scenario, which we term
``constructive'', the user can build the platform from scratch using
off-the-shelf components, with the goal of minimizing monetary cost while
ensuring that the desired throughput is achieved. In the second scenario,
which we term ``non-constructive'', the platform already exists and the
goal is to use the smallest amount of resources in this platform while
achieving the desired throughput. In this case we consider platforms that
are either fully homogeneous, or with a homogeneous network but
heterogeneous compute servers, or fully heterogeneous.
  In terms of the tree of operators, we
consider general binary trees and discuss relevant special cases (e.g.,
left-deep trees~\cite{ioannidis96query,chaudhuri98overview,Deshpande2007}).

Our main contributions are the following:
\begin{itemize}
 \item we formalize a set of relevant operator-placement problems;
 \item we establish complexity results (all problems turn out to be
   NP-complete);
\item we derive an integer linear programming formulation of the problem;
 \item %for those problems that are NP-complete,
we propose several heuristics for the constructive scenario; and
\item we compare heuristics through extended simulations, and assess
  their absolute performance with respect to the optimal solution
  returned by the linear program.
\end{itemize}

In Section~\ref{sec.models} we outline our application and platform models
for in-network stream processing. Section~\ref{sec.probs} defines several
relevant resource allocation problems, which are shown to be NP-complete in
Section~\ref{sec.np}. Section~\ref{sec.ilp} derives an integer linear
programming
formulation of the resource allocation problems. We present several
heuristics for solving one of our resource allocation problems in
Section~\ref{sec.heuristics}. These heuristics are evaluated in
Section~\ref{sec.results}.
%% \remark{Yves: comparison plus assessment of absolute performance
%% with respect to the lower bond of ILP?}
%We review related work in
%Section~\ref{sec.related} and
Finally, we conclude the paper in
Section~\ref{sec.conclusion} with a brief summary of our results and future
directions for research.

\section{Models}
\label{sec.models}

\subsection{Application model}
\label{sec.model.appli}

\begin{figure}
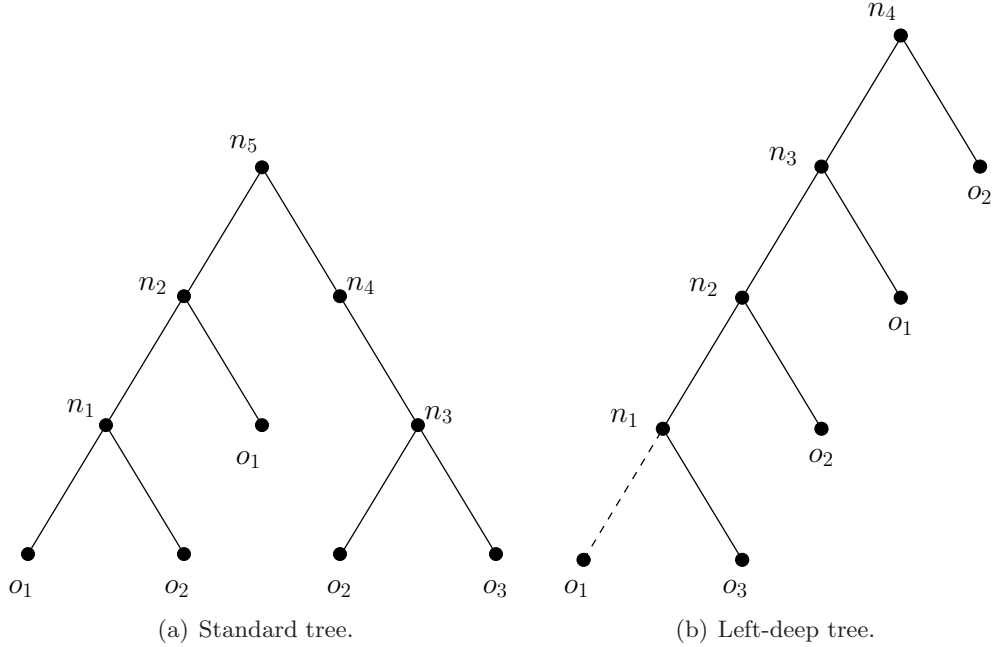

   \centering
    \vspace{-0.5cm}
   \subfigure[Standard tree.]{
     \includegraphics[width=0.42\textwidth]{sample-tree.fig}
     \label{fig.tree1}
   }$\quad$
   \subfigure[Left-deep tree.]{
     \includegraphics[width=0.36\textwidth]{leftdeeptree.fig}
     \label{fig.leftdeeptree}
   }
\vspace{-0.2cm}
   \caption{Examples of applications structured as a binary tree of operators.}
\label{fig.binarytree}
\end{figure}

%\begin{figure}
%  \begin{center}
%  \includegraphics[height=5cm]{sample_tree.fig}
%  \end{center}
%\caption{An example application structured as a binary tree of operators.}
%\label{fig.binarytree}
%\end{figure}

We consider an application that can be represented as a set of
operators,~$\NN$.
These operators are organized as a binary tree, as
shown in Figure~\ref{fig.binarytree}.
Operations are initially performed on basic objects, which are
made available and continuously updated at given locations in a
distributed network.
We denote the set of basic objects $\OO=\{\obj_1, \obj_2, \obj_3, \dots\}$.
The leaves of the tree are thus the basic objects, and several leaves may
correspond to the same object, as illustrated in the figure.
Internal nodes (labeled $n_1, n_2, n_3, \dots$) represent operator
computations.   We call those operators that have at least one
basic object as a child in the tree an \emph{al-operator} (for ``almost leaf'').
For an operator $n_i$ we define:
\begin{compactitem}
\item $\Leaf(i)$: the index set of the basic objects that are needed
for the computation of $n_i$, if any;
\item $\Child(i)$: the index set of the node's children in $\NN$, if any;
\item $\Parent(i)$: the index of the node's parent in $\NN$, if it exists.
\end{compactitem}
We have the constraint that $|\Leaf(i)| + |\Child(i)| \leq 2$ since our tree is binary.
All functions above are extended to sets of nodes:
$f(I) = \cup_{i\in I} f(i)$, where $I$ is an index set and $f$ is
$\Leaf$, $\Child$ or $\Parent$.

The application must be executed so that it produces final results,
where each result is generated by executing the whole operator tree
once, at a target rate. We call
this rate the application \emph{throughput} $\rho$ and the specification of the
target throughput is a QoS requirement for the application.
Each operator $n_i \in \NN$ must compute (intermediate) results at a rate
at least as
high as the target application throughput.  Conceptually, a server executing
an operator consists of two
concurrent threads that run in steady-state:
 \begin{itemize}
   \item One thread periodically downloads the most recent copies of the basic
   objects corresponding to the operator's leaf children, if any.  For our example
   tree in Figure~\ref{fig.tree1}, $n_1$ needs to download $\obj_1$ and
   $\obj_2$ while $n_2$ downloads only $\obj_1$ and $n_5$ does not
download any basic
   object. Note that these downloads may simply amount to constant
   streaming of data from sources that generate data streams. Each
   download has a prescribed cost in terms of bandwidth based
   on application QoS requirements (e.g., so that computations are
   performed using sufficiently up-to-date data).  A basic object $\obj_k$ has a size
   $\dbsize_k$ (in bytes) and needs to be downloaded by the processors that use it with frequency $\dbfreq_k$. Therefore,
   these basic object downloads consume an amount of bandwidth equal to $\dbrate_k =
   \dbsize_k \times \dbfreq_k$ on each network link and network card
   through which this object is communicated.  %Note that downloading $\obj_k$
%   to a location in the distributed network requires at least $\dbsize_k$
%   bytes of available disk space at that location.

   \item Another thread receives data from the operator's non-leaf children, if any, and
    performs some computation using downloaded basic objects and/or
    data received from other operators. % and possibly basic objects it is
    %continuously downloading.  Using this data,
    The operator produces some output
   that needs to be passed to its parent operator. The computation of operator
   $n_i$ (to evaluate the operator once) requires $\calc_i$ operations, and
   produces an output of size $\comm_{i}$.

\end{itemize}

%\begin{figure}
%  \begin{center}
%  \includegraphics[height=5cm]{leftdeeptree.fig}
%  \end{center}
%\caption{An example application structured as a left-deep tree of operators.}
%\label{fig.leftdeeptree}
%\end{figure}

In this paper we sometimes consider \emph{left-deep} trees, \ie binary
trees in which the right child of an operator is always a leaf. These
trees arise in practical settings~\cite{ioannidis96query,chaudhuri98overview,Deshpande2007} and we show an
example of left-deep tree in Figure~\ref{fig.leftdeeptree}.
Here $\Child(i)$ and $\Leaf(i)$ have cardinal $1$ for every operator
$n_i$ but for the bottom-most operator, $n_j$, for which
$\Child(j)$ has cardinal $0$, and $\Leaf(j)$ has cardinal $1$ or $2$
depending on the application.
%Note that the
%leftmost leaf $\inn$ is used only for symmetry, and represents an actual or
%fictitious flow of data.

%\begin{figure}
%\centering
%$$\begin{array}{cccccccccccccccccccccccccccccccccccccccc}
%\inn & \rightarrow & n_1 &  \rightarrow & n_2 & \rightarrow & n_3 & \rightarrow & n_4 & \rightarrow & n_5 & \rightarrow & n_6 & \rightarrow & \out\\
%& & \uparrow & & \uparrow & & \uparrow & & \uparrow & & \uparrow & & \uparrow\\
%& & A & & B & & A & & C & & B & & A
%\end{array}$$
%\caption{A left-deep application tree.}
%\label{fig.leftdeeptree}
%\end{figure}

\subsection{Platform model}
\label{sec.model.platform}

The target distributed network is a fully connected graph (\ie a clique)
interconnecting a set of resources $\RR = \PP \cup \SS$, where $\PP$
denotes compute servers, or \emph{processors} for short, and $\SS$ denotes
data servers, or \emph{servers} for short. Servers hold and update basic
objects, while processors apply operators of the application tree.  Each
server $S_l \in \SS$ (resp. processor $P_u\in \PP$)  is interconnected to the
network via a network card with maximum bandwidth $\bwprocs_l$
(resp. $\bwprocp_u$).   The network link from a server $S_l$ to a
processor $P_u$ has bandwidth $\bwlinks_{l,u}$; on such links the
server sends data and the processor receives it. The link between
two distinct processors $P_u$ and $P_v$ is
bidirectional and it has bandwidth $\bwlinkp_{u,v}(=\bwlinkp_{v,u})$ shared
by communications in both directions.
In addition, each processor $P_u \in \PP$ is
characterized by a compute speed $\speed_u$.
% and a storage capacity $\procdisk_u$.

Resources operate under the full-overlap, bounded multi-port
model~\cite{HongPrasanna07}. In this model, a resource $R_u$ can be involved in
computing, sending data, and receiving data simultaneously. Note that
servers only send data, while processors engage in all three activities.
A resource $R$, which is either a server or a processor, can be connected to multiple network links (since we assume a clique
network). The ``multi-port'' assumption states that $R$ can send/receive
data simultaneously on multiple network links. The ``bounded'' assumption
states that the total transfer rate of data sent/received by resource $R$
is bounded by its network card bandwidth ($\bwprocs_l$ for server $S_l$, or $\bwprocp_u$ for processor $P_u$).

\subsection{Mapping Model and Constraints}

Our objective is to map operators, i.e., internal nodes of the application
tree, onto processors. As explained in Section~\ref{sec.model.appli}, if a
tree node has leaf children it must continuously download up-to-date basic
objects, which consumes bandwidth on its processor's network card.  Each
used processor is in charge of one or several operators.  If there is only
one operator on processor $P_u$, while the processor computes for the
$t$-th final result it sends to its parent (if any) the data corresponding to
intermediate results for the $(t-1)$-th final result. It also receives data
from its non-leaf children (if any) for computing the $(t+1)$-th final
result. All three activities are concurrent (see
Section~\ref{sec.model.platform}).  Note however that different operators can
be assigned to the same processor. In this case, the same overlap happens,
but possibly on different result instances (an operator may be applied for
computing the $t_1$-th result while another is being applied for computing
the $t_2$-th). The time required by each activity must be summed for all
operators to determine the processor's computation time.
%The
%\emph{cycle-time} of a processor is a maximum time taken by these
%three activities. The \emph{period} of the mapping is the largest
%cycle-time across the processors. The achieved throughput is the inverse of
%the period.

%\remark{Henri: Ce n'est pas clair qu'il est utile de considerer le scenario
%ou les basic objects ne sont pas dupliques - au niveau complexite, ca
%change rien; au niveau programmation lineaire, ca rajoute juste des
%variables; au niveau heuristiques c'est pas sur qu'on en developpe de
%speciales pour le cas non-duplique.}

%We consider two cases regarding the location of basic objects (whose
%locations are fixed ahead of time). In the first case we assume that each
%basic object is available and updated at a single server. In the second
%case w
We assume that a basic object can be duplicated, and thus be available
and updated at multiple servers. We assume that duplication of basic objects is
achieved in some out-of-band manner specific to the target application. For instance,
this could be achieved
via the use of a distributed database infrastructure that enforces
consistent data replication.  In this case, a processor can choose among
multiple data sources when downloading a basic object.  Conversely, if two
operators have the same basic object as a leaf child and are mapped to
different processors, they must both continuously download that object (and
incur the corresponding network overheads).

We denote the mapping of the operators in $\NN$ onto the processors in
$\PP$ using an allocation function $\allocn$: $\allocn(i) = u$ if operator
$n_i$ is assigned to processor $P_u$. Conversely, $\allocni(u)$ is the
index set of operators mapped on $P_u$: $\allocni(u) = \{i|\allocn(i)=u\}$.

We also introduce new
notations to describe the location of basic objects. Processor~$P_u$
may need to download some basic objects from some servers. We use
$\download(u)$ to denote the set of $(k,l)$ couples where object
$\obj_k$ is downloaded by processor $P_u$ from server~$S_l$.

%% Given these notations we can now write the cycle-time of $P_u$ as:
%%   $$\cycletime(u) = \max \left(
%% \sum_{j \in \Child(\allocni(u))\setminus \allocni(u)} \frac{\comm_j}{\bwlinkp_{\allocn(j),u}},~~
%%        \sum_{i\in \allocni(u)} \frac{\calc_i}{\speed_{u}},~~
%%        \sum_{j \in \Parent(\allocni(u))\setminus \allocni(u)~}
%% \sum_{i\in \Child(j)\cap \allocni(u)} \frac{\comm_i}{\bwlinkp_{u,\allocn(j)}} \right)\;$$
Given these notations we can now express the constraints for the
required application throughput, $\rho$. Essentially, each processor
has to communicate and compute fast enough to
achieve this throughput, which is expressed via a set of constraints.
Note that a communication occurs only when a child or the parent of a
given tree node and this node are mapped on different processors. In other
terms, we neglect intra-processor communications.

\begin{itemize}

\item Each processor $P_u$ cannot exceed its computation capability:
\begin{equation}
\label{const-comp}
\forall P_u \in \PP,\quad \sum_{i \in \allocni(u)}
\rho \cdot  \frac{\calc_i}{\speed_u} \leq 1
\end{equation}

  \item $P_u$ must have enough bandwidth capacity to perform all its basic
  object downloads and all communication with other processors. This
is expressed by the following constraint, in which the first term corresponds
to basic object downloads,
  the second term corresponds to inter-node communications when a tree
  node
is assigned to $P_u$ and its parent node is assigned to another
processor, and the third term corresponds
  to inter-node communications when a node is assigned to $P_u$ and
some of its children  nodes are assigned to another processor:\\
$\forall P_u \in \PP,\quad$
  \begin{equation}
\label{const-bpu}
  \sum_{(k,l) \in \download(u)} \dbrate_k +
     \sum_{j\in\Child(\allocni(u))\setminus \allocni(u)} \rho.\delta_j  +
     \sum_{j\in \Parent(\allocni(u))\setminus \allocni(u)~}\sum_{i\in
\Child(j)\cap \allocni(u)} \rho.\delta_i
    \leq \bwprocp_u
\end{equation}

  \item Server $S_l$ must have enough bandwidth capacity to
  support all the downloads of the basic objects it holds at their
required rates:
 \begin{equation}
\label{const-bsl}
\forall S_l \in \SS,\quad \sum_{P_u\in \PP}
\sum_{(k,l) \in \download(u)} \dbrate_k \leq \bwprocs_l
\end{equation}

  \item The link between server $S_l$ and processor $P_u$ must have enough
  bandwidth capacity to support all possible object downloads from $S_l$
  to $P_u$ at the required rate:
\begin{equation}
\label{const-bslu}
\forall P_u \in \PP, \forall S_l \in \SS,\quad
\sum_{(k,l) \in \download(u)} \dbrate_k \leq \bwlinks_{l,u}
 \end{equation}

  \item The link between processor $P_u$ and processor $P_v$ must have
  enough bandwidth capacity to support all possible communications between
  the tree nodes mapped on both processors. This constraint can be written
  similarly to constraint~(\ref{const-bpu}) above, but without the
  cost of basic object downloads, and with specifying that $P_u$
  communicates with $P_v$:
\begin{equation}
\label{const-bpuv}
\forall P_u, P_v \in \PP,\quad
 \sum_{j\in\Child(\allocni(u))\cap \allocni(v)} \rho.\delta_j +
     \sum_{j\in \Parent(\allocni(u))\cap \allocni(v)~}\sum_{i\in
\Child(j)\cap \allocni(u)} \rho.\delta_i
       \leq \bwlinkp_{u,v}
\end{equation}

\end{itemize}

\section{Problem Definitions}
\label{sec.probs}

The overall objective of the operator-mapping problem is to ensure that a
prescribed throughput is achieved while minimizing a cost function.  We
consider two broad cases. In the first case, the user must buy processors
(with various computing speed and network card bandwidth specifications)
and build the distributed network dedicated to the application.  For this ``constructive''
problem, which we call \CONS, the cost function is simply the actual
monetary cost of the purchased processors.  This problem is relevant to,
for instance, the surveillance application mentioned in
Section~\ref{sec.intro}. The second case, which we call \NCONS, targets an
existing platform. The goal is then to use a subset of this platform so
that the prescribed throughput is achieved while minimizing a cost
function. Several cost functions can be envisioned, including the compute
capacity or the bandwidth capacity used by the application in steady state,
or a combination of the two. In the following, we consider
a cost function that accounts solely for processors. This function be based
on a processor's processing speed and on the bandwidth of its network card.

Different platform types may be considered for both the \CONS and the
\NCONS problems depending on the heterogeneity of the resources.  In the
\CONS case, we assume that some standard interconnect technology is used to
connect all the processors together ($\bwlinkp_{u,v} = \bwlinkp$). We also
assume that the same interconnect technology is used to connect each server
to processors ($\bwlinks_{l,u} = \bwlinks_l$).
We consider the case
in which the processors are homogeneous because only one type of CPUs and
network cards can be purchased ($\bwprocp_u = \bwprocp$ and $\speed_u =
\speed$). We term the
corresponding problem \CONSHOM. We also consider the case in which the
processors are heterogeneous with various compute speeds and network card
bandwidth, which we term \CONSLAN. In the \NCONS case we consider the case
in which the platform is fully homogeneous, which we term \NCONSHOM. We
then consider the case in which the processors are heterogeneous but the
network links are homogeneous ($\bwlinkp_{u,v} = \bwlinkp$ and $\bwlinks_{l,u} = \bwlinks_{l}$), which we
term \NCONSLAN.
Finally we consider the fully heterogeneous case in which network links
can have various bandwidths, which we term \NCONSHET.

Homogeneity in the platform as described above applies only to
processors and not to servers. Servers are always fixed for a given
application, together with the objects they hold.
We sometimes consider variants of the problem in which the servers and
application tree have particular characteristics. We denote by \HOMS the case
then all servers have identical network capability ($\bwprocs_l=\bwprocs$) and communication links to processors ($\bwlinks_{l,u}=\bwlinks$).
We can also consider the mapping of
%All these problems are defined for general application trees, with
%different objects. Variants of the problem are problems of mapping
particular trees, such as left-deep trees (\LDT) and/or homogeneous
trees with identical object rates $\dbrate_k=\dbrate$ and computing costs
$\calc_i=\calc$ (\HOMA). Also, we can consider application trees with no
communication cost ($\comm_i=0$, \NOCOMA).  All these variants correspond
to simplifications of the problem, and we simply append \HOMS, \LDT, \HOMA,
and/or \NOCOMA to the problem name to denote these simplifications.

\section{Complexity}
\label{sec.np}

Without surprise, most problem instances are NP-hard,
because downloading objects with
different rates on two identical servers is the same problem as
2-Partition~\cite{GareyJohnson}. But from a theoretical point of view,
it is important to assess the complexity of the
simplest instance of the problem,
i.e., mapping a fully homogeneous left-deep
tree application with objects placed on a fully homogeneous set of servers,
 onto a fully homogeneous set of processors:
\CONSHOM-\HOMS-\LDT-\HOMA-\NOCOMA (or \CLDTHom for short).
%Note that both processors and servers are identical between them,
%but there is no reason why processors and servers should have
%identical network card capacity, i.e.~$\bwprocp=\bwprocs$.
%,
%and also without any communication cost, is already difficult.
It turns out that even this problem is difficult, due to the
combinatorial space induced by the mapping of basic objects that are shared
by several operators.  Note that the corresponding non-constructive
problem is exactly the same, since it aims at minimizing the number of
selected processors given a pool of identical processors.  This complexity
result thus holds for both classes of problems.

%\begin{definition}
%\HomTree($\TT,\RR,K,\nbserv$) is the problem of deciding whether there exists a
%mapping of a fully homogeneous left-deep tree $\TT$
%(each leaf has equal rate $\dbrate$, each internal node has equal cost $\calc$)
%onto a fully homogeneous platform $\PP$ (identical computing nodes of speed $\speed$ and network card
%bandwidth $\bwproc$, identical object servers of bandwidth $\bwproc$, no communication cost), so that the period does not exceed a bound $K$
%and the number of enrolled processors is at most $\nbserv$?
%\end{definition}

\begin{definition}
The problem \CLDTHom (\CONSHOM-\HOMS-\LDT-\HOMA-\NOCOMA) consists in
minimizing the number of processors used in the application execution. $K$ is
the prescribed throughput that should not be violated.
\CLDTHomDec is the associated decision problem: given a number of
processors $N$, is there a mapping that achieves throughput~$K$?
\end{definition}

\begin{theorem}
\CLDTHomDec is NP-complete.
%\HomTree($\TT,\PP,K,\nbserv$) is NP-complete.
\end{theorem}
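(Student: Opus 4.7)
The plan is to show membership in NP and then NP-hardness by reduction from 3-Partition. For NP membership, a mapping $\allocn:\NN\to\PP$ together with the download assignments $\download(u)$ for each processor constitutes a polynomial-size certificate, and verifying constraints (\ref{const-comp})--(\ref{const-bpuv}) takes only polynomial time.

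For NP-hardness, I would reduce from 3-Partition, which is strongly NP-complete. Given $3m$ positive integers $a_1,\dots,a_{3m}$ with each $a_i\in(B/4,B/2)$ and $\sum_i a_i=mB$, I would build a \CLDTHomDec instance as follows: create $3m$ basic objects $\obj_1,\dots,\obj_{3m}$ placed on identical servers, and a left-deep tree of $mB$ operators arranged in $3m$ consecutive blocks, where the $i$-th block consists of $a_i$ operators all having $\obj_i$ as their leaf child. I then fix the homogeneous parameters $\rho$, $\calc$, $\speed$, $\bwprocp$ so that each processor can host at most $B$ operators by constraint~(\ref{const-comp}) and download at most $3$ distinct basic objects by constraint~(\ref{const-bpu}); the server and link bandwidths are set generously, and constraint~(\ref{const-bpuv}) is vacuous since \NOCOMA gives $\delta_i=0$. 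The decision question then asks whether a mapping onto $N=m$ processors exists.

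The forward direction is immediate: a 3-Partition solution yields a mapping in which each processor hosts all operators of one triple of blocks, using exactly $B$ operators and $3$ distinct objects. For the converse, I plan to use a double counting argument. Summing the computation bound across the $m$ processors yields at most $mB$ operators, and equality must hold since all operators are placed, so every processor hosts exactly $B$ operators. Summing the bandwidth bound gives at most $3m$ (processor, object) incidences, while each of the $3m$ objects must be downloaded by at least one processor in order to serve its operators; equality must hold here too, so each object is downloaded by exactly one processor. Thus the object indices are partitioned into $m$ groups whose $a_i$ values sum to exactly $B$, and the range $a_i\in(B/4,B/2)$ forces each group to have exactly $3$ elements, recovering a 3-Partition solution.

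The main difficulty is that \HOMA forbids encoding the 3-Partition values $a_i$ as download rates. The plan is to encode them instead as the \emph{multiplicities} of operators sharing a common basic object, and to rely on the tight double-counting argument above to forbid any fractional split of a block across processors. Since 3-Partition is strongly NP-complete, the $\Theta(mB)$ size of the constructed instance remains polynomial in the unary-encoded input, so the reduction runs in polynomial time.
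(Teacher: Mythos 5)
Your proposal is correct and follows essentially the same route as the paper: a reduction from strongly NP-complete 3-Partition in which each value $a_i$ is encoded as the multiplicity of operators sharing basic object $\obj_i$, with the computation constraint capping each processor at $B$ operators and the network-card constraint capping it at $3$ distinct object downloads, so that the tight double count forces exactly one processor per object and a per-processor load of exactly $B$. The only cosmetic difference is that the paper additionally sets each server's bandwidth equal to the object rate (one object per server), whereas you derive the ``one processor per object'' conclusion purely from the processor-side counting, which suffices.
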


\begin{proof}
First, \CLDTHomDec belongs to NP. %Indeed, consider a mapping $\alloco$ of all $t$ leaves and internal nodes of the left-deep tree $\TT = \NN \cup \LL$ of size $t$ ($t$ leaves and $t$ internal nodes) onto the target platform $\PP$.
%Indeed, consider the left-deep tree $\TT = \NN \cup \LL$ of size $t$
%($t$ leaves and $t-1$ internal nodes)
Given an allocation of operators to processors and the download
list $\download(u)$ for each processor~$P_u$, we can check in
polynomial time that we use no more than $N$ processors,
%% that the cycle-time of each enrolled processor $P_u$ does not exceed $T=1/K$:
%% $$\left| \allocni(u) \right|  \frac{\calc}{\speed}\leq T\;,$$
that the throughput of each enrolled processor respects $K$:
$$K\times \left| \allocni(u) \right|  \frac{\calc}{\speed}\leq 1\;,$$
and that bandwidth constraints are respected.
%resources are not exceeded:
%$$\left| \{k \in \LL, k = \Leaf(i) \text{ and } \allocn(i) = u \} \right|  \dbrate \leq B$$

To establish the
completeness, we use a reduction from 3-Partition, which is NP-complete
in the strong sense~\cite{GareyJohnson}. We consider an
arbitrary instance $\mathfrak{I}_1$ of 3-Partition:
given $3n$ positive integer numbers $\{ a_{1}, a_{2}, \ldots, a_{3n} \}$
and a bound $R$, assuming that $\frac{R}{4} < a_i <  \frac{R}{2}$ for all $i$
  and that $\sum_{i=1}^{3n} a_i = nR$,
  is there a partition
  of these numbers into $n$ subsets $I_1, I_2, \ldots, I_n$ of sum $R$?
  In other words, are there $n$ subsets $I_1, I_2, \ldots, I_n$ such
that $I_1 \cup I_2 \ldots \cup I_n = \{1, 2, \ldots, 3n\}$,
  $I_i \cap I_j = \emptyset$ if $i \neq j$, and $\sum_{j \in I_i} a_j =
R$ for all $i$ (and $|I_i|=3$ for all~$i$).
 Because 3-Partition is NP-complete in the strong sense, we can encode
the $3n$ numbers in unary and assume that the size of $\mathfrak{I}_1$
is $O(n + M)$, where $M = \max_i \{a_i\}$.

We build the following instance $\mathfrak{I}_2$ of \CLDTHomDec:
\begin{itemize}
  \item The object set is $\OO = \{\obj_1,...,\obj_{3n}\}$, and
there are $3n$ servers each holding an object, thus $\obj_i$ is
available on server~$S_i$. The rate of $\obj_i$ is $\dbrate=1$ , and
the bandwidth limit of the servers is set to~$\bwprocs=1$.
   \item The left-deep tree consists of $|\NN|=nR$ operators with
$\calc=1$.
% and $t$~leaves, each one attached to an internal node.
Each object
 $\obj_i$ appears $a_i$ times in the tree (the exact location does
not matter), so that there are $|\NN|$ leaves in the tree, each
associated to a single operator of the tree.
  \item The platform consists of $n$ processors of speed $\speed
= 1$ and bandwidth $\bwprocp = 3$. All the link bandwidths
interconnecting servers and processors are equal to~$\bwlinks=\bwlinkp=1$.
  \item Finally we ask whether there exists a solution matching the
bounds %$T=R$ (\ie $K=1/R$)
$1/K = R$
and $N=n$.
\end{itemize}

The size of
$\mathfrak{I}_2$ is clearly polynomial in the size of
$\mathfrak{I}_1$, since the size of the tree is bounded by $3nM$.
We now show that instance $\mathfrak{I}_1$ has a solution if and only
if instance $\mathfrak{I}_2$ does.

Suppose first that $\mathfrak{I}_1$ has a solution. We map all
operators corresponding to occurrences of object $\obj_j$, $j \in
I_i$, onto processor $P_i$. Each processor receives three distinct
objects, each coming from a different server, hence bandwidths
constraints are satisfied. Moreover, the number of operators computed by
$P_i$ is equal to $\sum_{j\in I_i} a_i = R$, and the required throughput it achieved
because $K R \leq 1$.
We have thus built a solution to $\mathfrak{I}_2$.

Suppose now that $\mathfrak{I}_2$ has a solution, \ie a mapping
matching the bound %$T=R$
$1/K = R$
with $n$~processors.
%We first observe that no leaf can be duplicated, since the total
%bandwidth capacity of the platform is $3n$.
Due to bandwidth constraints, each of the $n$ processors is assigned
at most three
distinct objects. Conversely, each object must be assigned to at least
one processor and there are $3n$ objects, so each processor is
assigned exactly $3$ objects in the solution, and no object is sent to
two distinct processors. Hence, a processor must compute all operators
corresponding to the objects it needs to download, which directly
leads to a solution of $\mathfrak{I}_1$ and concludes the proof.
\end{proof}

Note that problem \CLDTHomDec becomes polynomial if one adds the additional
restriction that no basic object is used by more than one operator in the
tree. In this case, one can simply assign operators to $\lceil |\NN| \times \calc
/ \speed \rceil$ arbitrary processors in a round-robin fashion.

\section{Linear Programming Formulations}
\label{sec.ilp}

In this section, we formulate the \CONS optimization problem
as an integer linear program (ILP). We deal with the most general
instance of the problem \CONSLAN.
Then we explain how to transform this integer linear program to formulate
the \NCONSHET problem.

\subsection{ILP for \CONS}$ $

{\bf Constants --} We first define the set of constant values that
define our problem.  The application tree is defined via parameters $par$ and $leaf$, and
the location of objects on servers is defined via parameter $obj$.
Other parameters are defined with the same notations as
previously introduced: $\comm_i, \calc_i$ for operators,
$\dbrate_k$ for object download rates, and $\bwprocs_l$ for server network card bandwidths.
More formally:
\begin{itemize}
\item $par(i,j)$ is a boolean variable equal to $1$ if operator
$n_i$ is the parent of $n_j$ in the application tree, and $0$
otherwise.
\item $leaf(i,k)$ is a boolean variable equal to $1$ if operator $n_i$
requires object $\obj_k$ for computation, i.e., $o_k$ is a children of
$n_i$ in the tree. Otherwise $leaf(i,k)=0$.
\item $obj(k,l)$ is a boolean variable equal to $1$ if server $S_l$
holds a copy of object~$\obj_k$.
\item $\comm_i, \calc_i, \dbrate_k, \bwprocs_l$ are rational numbers.
\end{itemize}

The platform can be built using different types of processors. More formally,
we consider a set $\CC$ of processor specifications, which we call ``classes''. We can
acquire as many processors of a class $c \in \CC$ as needed, although no more
than $\NN$ processors are necessary overall.
We denote the cost of a processor in class $c$ by $cost_c$. Each processor of class~$c$ has
computing speed $\speed_c$ and network card bandwidth $\bwprocp_c$. The link bandwidth between processors is
a constant $\bwlinkp$, while the link between a server~$S_l$ and a processor
is~$\bwlinks_{l}$. For each class, processors are
numbered from~$1$ to $|\NN|$, and $P_{c,u}$ refers to the $u^{th}$
processor of class~$c$. Finally, $\rho$ is the throughput that must be achieved
by the application:
\begin{itemize}
\item $cost_c, \speed_c, \bwprocp_c, \bwlinkp, \bwlinks_{l}$ are
rational numbers;
%% \item $T$ is a rational number, representing the maximum acceptable
%% cycle time for the application (i.e., the inverse of the prescribed
%% throughput).
\item $\rho$ is a rational number.
\end{itemize}

\medskip
{\bf Variables --} Now that we have defined the constants that define our problem we define unknown variables to be computed:
\begin{itemize}
\item $x_{i,c,u}$ is a boolean variable equal to $1$ if operator $n_i$ is
mapped on $P_{c,u}$, and $0$ otherwise.
There are $|\NN|^2.|\CC|$ such variables, where $|\CC|$ is the number
of different classes of processors.
\item $d_{c,u,k,l}$ is a boolean variable equal to $1$ if processor
$P_{c,u}$ downloads object $\obj_k$ from server~$S_l$, and $0$
otherwise. The number of such variables is $|\CC|.|\NN|.|\OO|.|\SS|$.
\item $y_{i,c,u,i',c',u'}$ is a boolean variable equal to $1$ if $n_i$
is mapped on $P_{c,u}$, $n_{i'}$ is mapped on
$P_{c',u'}$, and $n_i$ is the parent of $n_{i'}$
in the application tree. There are $|\NN|^4.|\CC|^2$ such variables.
\item $used_{c,u}$ is a boolean variable equal to $1$ if processor
$P_{c,u}$ is used in the final mapping, i.e., there is at
least one operator mapped on this processor, and $0$ otherwise. There are
$|\CC|.|\NN|$ such variables.
\end{itemize}

\medskip
{\bf Constraints --} Finally, we must write all constraints involving
our constants and variables.  In the following, unless stated otherwise,
$i$, $i'$, $u$ and $u'$ span set~$\NN$; $c$ and $c'$ span set~$\CC$; $k$
spans set~$\OO$; and $l$ spans set~$\SS$.
First we need constraints to guarantee that the allocation of operators
to processors is a valid allocation, and that all required downloads
of objects are done from a server that holds the corresponding object.

\begin{itemize}
\item $\forall i~~ \sum_{c,u} x_{i,c,u}=1$: each operator is
placed on exactly one processor;
\item $\forall c,u,k,l~~ d_{c,u,k,l} \leq obj(k,l)$: object~$\obj_k$
can be downloaded from $S_l$ only if $S_l$ holds $\obj_k$;
\item $\forall c,u,k,l~~ d_{c,u,k,l} \leq \sum_i x_{i,c,u}.leaf(i,k)$:
if there is no operaotr assigned to $P_{c,u}$ that
requires object $k$, then $P_{c,u}$ does not need to download
object~$k$ and $d_{c,u,k,l}=0$ for all server~$S_l$.
\item $\forall i,k,c,u~~ 1\geq \sum_l d_{c,u,k,l} \geq
x_{i,c,u}.leaf(i,k)$: processor $P_{c,u}$ must download
object~$\obj_k$ from exactly one server if there is an operator $n_i$
mapped on this processor that requires $\obj_k$ for computation.
\end{itemize}

\medskip

The next set of constraints aim at properly constraining
variable~$y_{i,c,u,i',c',u'}$. Note that a straightforward definition would be
$y_{i,c,u,i',c',u'}=par(i,j).x_{i,c,u}.x_{i',c',u'}$, i.e., a logical conjunction
between three conditions. Unfortunately, this definition makes our program non-linear as two of the conditions are variables.
Instead, for all $i,c,u,i',c',u'$, we write:
\begin{itemize}
\item $y_{i,c,u,i',c',u'} \leq par(i,j)$; $~y_{i,c,u,i',c',u'} \leq
x_{i,c,u}$; $~y_{i,c,u,i',c',u'} \leq x_{i',c',u'}$: $y$ is forced to
$0$ if one of the conditions does not hold.
\item $y_{i,c,u,i',c',u'} \geq par(i,j).\left( x_{i,c,u} +
x_{i',c',u'}-1 \right)$: $y$ is forced to be $1$ only if
the three conditions are true (otherwise the right term is less than or
equal to~$0$).
\end{itemize}

\medskip

The following constraints ensure that $used_{c,u}$ is properly defined:
\begin{itemize}
\item $\forall c,u~ used_{c,u} \leq \sum_{i} x_{i,c,u}$:
processor $P_{c,u}$ is not used if no operator is mapped on it;
\item $\forall c,u,i~ used_{c,u} \geq x_{i,c,u}$: processor $P_{c,u}$ is
used if at least one operator~$n_i$ is mapped to it.
\end{itemize}

\medskip

Finally, we have to ensure that the required throughput is achieved and that
the various bandwidth capacities are not exceeded, following equations
(\ref{const-comp})-(\ref{const-bpuv}).
%\remark{Henri: Based on what I said earlier, I believe that constraints \#2 and \#3 below are useless. (Look at constraint \#7 below.)}
%\remark{Henri: Pour la contrainte \#4 ci-dessous, je me souvient que Anne m'avait convaincu qu'il fallait pas les deux termes en $\delta$ comme dans la contrainte \#7, histoire de ne pas compter les choses deux fois. Mais en fait je n'y crois plus.}

\begin{itemize}
\item $\forall c,u~~ \sum_{i} x_{i,c,u}.\rho
\frac{\calc_i}{\speed_c} \leq 1$: the computation of each processor must
be fast enough so that the throughput is at least equal to $\rho$;

%\item $\forall c,u~~  \sum_{i,i',(c',u')\neq(c,u)}~
%y_{i,c,u,i',c',u'}.\rho \frac{\comm_{i'}}{\bwlinkp} \leq 1$: the incoming
%communications to each processor must be fast enough so that the throughput is
%at least equal to $\rho$;
%\item $\forall c,u~~  \sum_{i,i',(c',u')\neq(c,u)}~
%y_{i',c',u',i,c,u}.\rho \frac{\comm_{i}}{\bwlinkp} \leq 1$: the outgoing
%communications from each processor must be fast enough so that the throughput is
%at least equal to $\rho$;

\item $\forall c,u~~  \sum_{k,l}
d_{c,u,k,l}.\dbrate_k~+~\sum_{i,i',(c',u')\neq(c,u)}~y_{i,c,u,i',c',u'}.\rho.\comm_{i'}
+~\sum_{i,i',(c',u')\neq(c,u)}~y_{i',c',u',i,c,u}.\rho.\comm_{i}
\leq \bwprocp_c$: bandwidth constraint for the
processor network cards;

\item $\forall l~~ \sum_{c,u,k}~ d_{c,u,k,l}.\dbrate_k \leq
\bwprocs_l$: bandwidth constraint for the server network cards;

\item $\forall l,c,u~~  \sum_k d_{c,u,k,l}.\dbrate_k \leq
\bwlinks_{l}$: bandwidth constraint for links between servers and
processors;

\item $\forall c,u,c',u'$ with $(c,u)\neq (c',u')~~
\sum_{i,i'} y_{i,c,u,i',c',u'}.\rho.\comm_{i'} +
\sum_{i,i'} y_{i',c',u',i,c,u}.\rho.\comm_{i} \leq \bwlinkp$:
bandwidth constraint for links between processors.
\end{itemize}
%% \begin{itemize}
%% \item $\forall c,u~~ \sum_{i} x_{i,c,u}
%% \frac{\calc_i}{\speed_c} \leq T$: the computation of each processor must
%% occur within the prescribed period;
%% \item $\forall c,u~~  \sum_{i,i',(c',u')\neq(c,u)}~
%% y_{i,c,u,i',c',u'} \frac{\comm_{i'}}{\bwlinkp} \leq T$: the incoming
%% communications to each processor must all occur within the prescribed period;
%% \item $\forall c,u~~  \sum_{i,i',(c',u')\neq(c,u)}~
%% y_{i',c',u',i,c,u} \frac{\comm_{i}}{\bwlinkp} \leq T$: the outgoing
%% communications from each processor must all occur within the prescribed period;

%% \item $\forall c,u~~  \sum_{k,l}
%% d_{c,u,k,l}.\dbrate_k~+~\sum_{i,i',(c',u')\neq(c,u)}~y_{i,c,u,i',c',u'}
%% \frac{\comm_{i'}}{T} \leq \bwprocp_c$: bandwidth constraint for the
%% processor network cards;

%% \item $\forall l~~ \sum_{c,u,k}~ d_{c,u,k,l}.\dbrate_k \leq
%% \bwprocs_l$: bandwidth constraint for the server network cards;

%% \item $\forall l,c,u~~  \sum_k d_{c,u,k,l}.\dbrate_k \leq
%% \bwlinks_{l}$: bandwidth constraint on links between servers and
%% processors;

%% \item $\forall c,u,c',u'$ with $(c,u)\neq (c',u')~~
%% \sum_{i,i'} y_{i,c,u,i',c',u'} \frac{\comm_{i'}}{T} +
%% \sum_{i,i'} y_{i',c',u',i,c,u} \frac{\comm_{i}}{T} \leq \bwlinkp$:
%% bandwidth constraint on links between processors.
%% \end{itemize}

%\remark{Say a word about the number of variables and of constraints? (easy to do)}

\medskip
{\bf Objective function.}

We aim at minimizing the cost of used processors, thus the objective
function is
\begin{center}
$\min \left( \sum_{c,u} used_{c,u}.cost_c \right)$.
\end{center}

\medskip

\subsection{ILP for \NCONS}
The linear program for the \NCONS problem is very similar to the \CONS
one, except that the platform is known a-priori. Furthermore, we no longer
consider processor classes.  However, we can simply assume that there is only one
processor of each class, and define~$|\CC|=|\PP|$, the set of processors
of the platform. The {\em number} of processors of
class~$c$ is then limited to~$1$. As a result, all indices~$u$ in
the previous linear program are removed, and we obtain
a linear program formulation of the \NCONSLAN problem.
%XXX
The number of variables and
constraints is reduced from $|\NN|$ to $1$ when appropriate. %XXX
We can further generalize the linear program to \NCONSHET, by adding
links of different bandwidths between processors. We just need to replace
$\bwlinkp$ by $\bwlinkp_{c,c'}$ and $\bwlinks_l$ by $\bwlinks_{l,c}$ every
time they appear in the linear program in the previous section.
Altogether, we have provided integer linear program formulations for
all our constructive and non-constructive problems.

\section{Heuristics}
\label{sec.heuristics}
%\remark{Idea: Write a linear program and do the usual relaxations}

%\remark{Idea: a ``double greedy'' heuristic. First found the most expensive node. Place it on the most powerful processor. Then keep placing the most expensive nodes that use the same basic objects as the one we just placed on the same processor. When that processor is full, repeat considering all nodes again. Note that this doesn't account for data sizes explicitely, simply tries to encourage some reuse. Perhaps another data-aware heuristic would be good.}
%\vspace{0.5cm}
%\remark{Nodes: inner nodes $\in \NN$\\ al-nodes: almost-leaf nodes: nodes that have leaf-children, hence who have to make downloads of basic objects.}

In this section we propose several heuristics to solve the \CONS
operator-placement problem. Due to lack of space, we leave the development
of heuristics for the \NCONS problem outside the scope of this
paper. We choose to focus on constructive scenarios because such scenarios
are relevant to practice and, to the best of our knowledge, have not been
studied extensively in the literature. We say that the heuristics
can then ``purchase'' processors, or ``sell back'' processors, until a
final set of needed processors is determined.

We consider two types of heuristics: (i)~operator placement heuristics and
(ii)~object download heuristics. In a first step, an operator placement
heuristic is used to determine the number of processors that should be
purchased, and to decide which operators are assigned to which processors.
Note that all our heuristics fail if a single operator cannot be treated by
the most expensive processor with the desired throughput.  In a second
step, an object download heuristic is used to decide from which server each
processor downloads the basic objects that are needed for the operators
assigned to this processor. In the next two sections we propose several
candidate heuristics both for operator placement and object download.

\subsection{Operator Placement Heuristics}

\subsubsection{Random}

While there are some unassigned operators, the Random heuristic picks one of
these unassigned operators randomly, called $op$. It then purchases the
cheapest possible processor that is able to handle $op$ while achieving the
required application throughput. If there is no such processor, then the
heuristic considers $op$ along with one of its children operators or with
its parent operator. This second operator is chosen so that it has the most
demanding communication requirements with $op$ (the intuition
is that we try to reduce communication overhead).
If no processor can be acquired
that can handle both the operators together, then the heuristic fails. If
the additional operator had already been assigned to another processor,
this last processor is sold back.

\subsubsection{Comp-Greedy}

The Comp-Greedy heuristic first sorts operators in non-increasing order of
$\calc_i$, i.e., most computationally demanding operators first. While
there are unassigned operators, the heuristic purchases the most expensive
processor available and assigns the most computationally demanding unassigned operator
to it. If this operator cannot be processed on this processor so that the
required throughput is achieved, then the heuristic uses a grouping
technique similar to that used by the Random heuristic (i.e., trying to
group the operator with its child or parent operator with which it has the
most demanding communication requirement). If after this step some capacity
is left on the processor, then the heuristic tries to assign other
operators to it. These operators are picked in non-increasing order of
$\calc_i$, i.e., trying to first assign to this processor the most
computationally demanding operators. Once no more operators can be
assigned to the processor, the heuristic attempts to ``downgrade'' the
processor. This downgrading consists in, if possible, replacing the current
processor by the cheapest processor available that can still handle all the
operators assigned on the current processor.

\subsubsection{Comm-Greedy}

The Comm-Greedy heuristic attempts to group operators to reduce
communication costs.  It picks the two operators that
have the largest communication requirements.
These two operators are grouped and assigned to the same processor,
thus saving the costly communication between both processors. There
are three cases to consider for this assignment: (i)~both operators were
unassigned, in which case the heuristic simply purchases the cheapest
processor that can handle both operators; if no such processor is available
then the heuristic purchases the most expensive processor for each operator;
(ii)~one of the operators was already assigned to a processor, in which
case the heuristic attempts to accommodate the
other operator as well; if this is not possible then the heuristic
purchases the most expensive processor for the other operator; (iii)~both
operators were already assigned on two different processors, in which case
the heuristic attempts to accommodate both operators on one processor and
sell the other processor; if this is not possible then the current operator
assignment is not changed.

\subsubsection{Object-Greedy}

The Object-Greedy heuristic attempts to group operators that need the same
basic objects. Recall that an al-operator is an operator that requires at
least one basic object. The heuristic sorts all al-operators by the maximum
required download frequency of the basic objects they require, i.e., in
non-increasing order of maximum $rate_j$ values  (and $\calc_i$ in case of
equality). The heuristic then purchases the most expensive processor and
assigns the first such operators to it. Once again, if the most expensive
processor cannot handle this operator, the heuristic attempts to group
the operator with one of its unassigned parent or child operators. If this
is not possible, then the heuristic fails. Then, in a greedy fashion, this
processor is filled first with al-operators and then with other operators as much
as possible.

\subsubsection{Subtree-Bottom-Up}

The Subtree-Bottom-Up heuristic first purchases as many most expensive
processors as there are al-operators and assigns each al-operator to a
distinct processor.  The heuristic then tries to merge the operators with
their father on a single machine, in a bottom-up fashion (possibly leading
to the selling back of some processors). Consider a processor on which a
number of operators have been assigned. The heuristic first tries to
allocate as many parent operators of the currently assigned operators to
this processor. If some parent operators cannot be assigned to this
processor, then one or more new processors are purchased.  This mechanism
is used until all operators have been assigned to processors.

\subsubsection{Object-Grouping}

For each basic object, this heuristic counts how many operators need this
basic object. This count is called the ``popularity'' of the basic object.
The al-operators are then sorted by non-increasing sum of the popularities
of the basic object they need. The heuristic starts by purchasing the most
expensive processor and assigning to it the first al-operator. The
heuristic then attempts to assign as many other al-operators that require
the same basic objects as the first al-operator, taken in order of
non-increasing popularity, and then as many non al-operators as possible.
This process is repeated until all operators have been assigned.

\subsubsection{Object-Availability}
This heuristic takes into account the distribution of basic objects on
the servers. For each object $k$ the number $av_k$ of servers handling
object~$o_k$ is
calculated. Al-operators in turn are treated in increasing order of
$av_k$ of the basic objects they need to download. The heuristic tries
to assign as many al-operators downloading object $k$ as possible on
a most expensive processor. The remaining internal operators are
assigned in the same mechanism as Comp-Greedy proceeds, i.e., in
decreasing order of $w_i$ of the operators.

\subsection{Object Download Heuristics}

Once an operator placement heuristic has been executed, each al-operator is
mapped on a processor, which needs to download basic objects required by
the operator. Thus, we need to specify from which server this download should
occur. Two server selection heuristics are proposed in order to define, for
each processor, the server from which required basic objects are
downloaded.

%For the association basic object $o_k$, which is mapped on processor
%$p_i$ via one of the previous heuristics, to a server $s_l$, we use
%two server selection heuristics.

\subsubsection{Server-Selection-Random}
This heuristic is only used in combination with Random.
%Intelligent.
Once Random has decided about the mapping of
operators onto processors, Server-Selection-Random associates randomly a
server to each basic object a processor has to download.

\subsubsection{Server-Selection-Intelligent}
This server selection heuristic is more sophisticated and is used in
combination with all operator placement heuristics except Random.
%Intelligent.
Server-Selection-Intelligent uses three loops: the first
loop assigns objects that are held in exclusivity, i.e., objects that
have to be downloaded from a specific server. If not all downloads can
be guaranteed, the heuristic fails. The second loop
associates as many downloads as possible to servers that provide only
one basic object type. The last loop finally tries to assign the
remaining basic objects that have to be downloaded. For this purpose
objects are treated in decreasing order of
interestedProcs/numPossibleServers, where interestedProcs is the
remaining number of processors that need to download the object and
numPossibleServers is the number of servers where the object still can
be downloaded. In the decision process servers are considered in
decreasing order of min(remainingBW, linkBW),
where remainingBW is the remaining capacity of the servers network
card and linkBW is the bandwidth of the communication link.

\medskip
Once the server association process is done, a processor downgrade
procedure is called. All processors are replaced by the less expensive
model that fulfills the CPU and network card requirements of the
allocation.

\section{Simulation Results}
\label{sec.results}

\subsection{Resource Cost Model}

In order to instantiate our simulations with realistic models for resource
costs, we use information available from the Dell Inc. Web site. More
specifically, we use the prices for configurations of Intel's latest,
high-end, rack-mountable server (PowerEdge R900), as advertised on the Web
site as of early March 2008. Due to the large number of available configurations,
we only consider processor cores with 8MB L1 caches (so that their
performances are more directly comparable), and with optical Gigabit Ethernet (GbE)
network cards manufactured by Intel Inc. For simplicity, we assume that the
effective bandwidth of a network card is equal to its peak performance. In reality,
we know that, say, a 10GbE network card delivers a bandwidth lower
than 10Gbps due to various software and hardware overheads.  We also make
the assumption that the performance of a multi-processor multi-core server
is proportional to the sum of the clock rates of all its cores. This
assumption generally does not hold in practice due, e.g., to
parallelization overhead and cache sharing. It is outside the scope of this
work to develop (likely elusive) generic performance models for network
cards and multi-processor multi-core servers, but we argue that the above
assumptions still lead to a reasonable resource cost model. The
configuration prices are show in Table~\ref{tab.conf_prices}, relative to
the base configuration, whose cost is \$7,548.  Note that we do not
consider configurations designed for low power consumption, which
achieve possibly lower performance at higher costs.

\begin{center}
\begin{table}[Ht]
\caption{Incremental costs for increases in processor performance or
network card bandwidth relative to a \$7,548 base configuration (based on data
from the Dell Inc. web site, as of early March 2008).}
\label{tab.conf_prices}
\begin{center}
\begin{tabular}{|c|c|c||c|c|c|}
\hline
\multicolumn{3}{|c||}{Processor} & \multicolumn{3}{|c|}{Network Card}\\
\hline
Performance  & Cost & Ratio & Bandwidth & Cost & Ratio\\
(GHz) & (\$) & (GHz/\$) & (Gbps) & (\$) & (Gbps/\$) \\
\hline
\hline
11.72	& 7,548 + 0     & 1.55 $\times 10^{-3}$ & 1   & 7,548 + 0      & 1.32 $\times 10^{-4}$\\
19.20	& 7,548 + 1,550 & 1.93 $\times 10^{-3}$ & 2   & 7,548 + 399    & 2.51 $\times 10^{-4}$\\
25.60	& 7,548 + 2,399 & 2.38 $\times 10^{-3}$ & 4   & 7,548 + 1,197  & 4.57 $\times 10^{-4}$\\
38.40	& 7,548 + 3,949 & 3.12 $\times 10^{-3}$ & 10  & 7,548 + 2,800  & 9.66 $\times 10^{-4}$\\
46.88	& 7,548 + 5,299 & 3.43 $\times 10^{-3}$ & 20  & 7,548 + 5,999  & 14.76 $\times 10^{-4}$\\
\hline
\end{tabular}
\end{center}
\end{table}
\end{center}

\subsection{Simulation Methodology}
\label{sec.simumethod}

All our simulations use randomly generated binary operator trees with at
most $N$ operators, which can be specified.  All leaves correspond to basic
objects, and each basic object is chosen randomly among 15 different types.
For each of these 15 basic object types, we randomly choose a fixed size.
In simulations with \emph{small objects}, the object sizes are in the range
5-30MB, whereas \emph{big objects} have data sizes in the range 450-530MB.
The download frequency for basic objects is either fixed to 1/50s or 1/2s.
The computation amount $w_n$ for an operator $n$ (a non-leaf node in the
tree), depends on its children $l$ and $r$: $w_n = (\delta_l +
\delta_r)^{\alpha}$, where $\alpha$ is a constant fixed for each simulation
run. The same principle is used for the output size of each operator, using
a constant $\beta=1.0$ for all simulations.  The application throughput
$\rho$ is fixed to $1.0$ for all simulations. Throughout the whole set of
simulations we use the same server architecture: we dispose of 6 servers,
each of them is equipped with a 10 GB network card. Objects of our 15 types
are randomly distributed over the 6 servers. We assume that servers and
processors are all interconnected by a 1GB link. The mapping operator
problem is defined by many parameters, an we argue that our simulation
methodology, in which several parameters are fixed, is sufficient to
compare our various heuristics.

\subsection{Results}
\label{sec.simu}

We present hereafter results for several sets of experiments.  Due to lack
of space we will only present the most significant figures, but the entire
set of figures can be found on the web~\cite{codeVeroQuery}.

\paragraph{High download rates - small object sizes}
In a first set of simulations, we study the behavior of the heuristics when
download rates are high and object sizes small (5-30MB).
Figure~\ref{fig.exp1} shows the results, when the number of nodes $N$ in
the tree varies, but the computation factor $\alpha$ is fixed. As expected,
Random performs poorly and the platform chosen for an application with
around 100 operators or more exceeds a cost of \$400,000 (cf.
Figure~\ref{fig.exp1-0.5}), when $\alpha = 0.5$). Subtree-bottom-up
achieves the best costs, and for an application with 100 operators it finds
a platform for the price of \$8,745. All Greedy heuristics exhibit similar
performance, slightly poorer than Subtree-bottom-up, but still withing
acceptable costs under \$50,000. Perhaps surprisingly, the heuristics that
pay special attention for basic objects, Object-Grouping and
Object-Availability, perform poorly.

With a larger value of $\alpha$ (cf. Figure~\ref{fig.exp1-1.7}) the
operator tree size becomes a more limiting factor. For trees with more than
80 operators, almost no feasible mapping can be found.  However, the
relative performance of our heuristics remains almost the same, with two
notable features: a)~Object-Grouping still finds some mappings for
operator trees bigger up to 120 operators, with costs between \$200,000 and
\$275,000; b)~Comp-Greedy and Object-Greedy perform as well at at times
better than Subtree-bottom-up when the number of operator increases.

\begin{figure}
   \centering
    %\vspace{-0.5cm}
   \subfigure[$\alpha = 0.9$.]{
     \includegraphics[angle=270, width=0.45\textwidth]{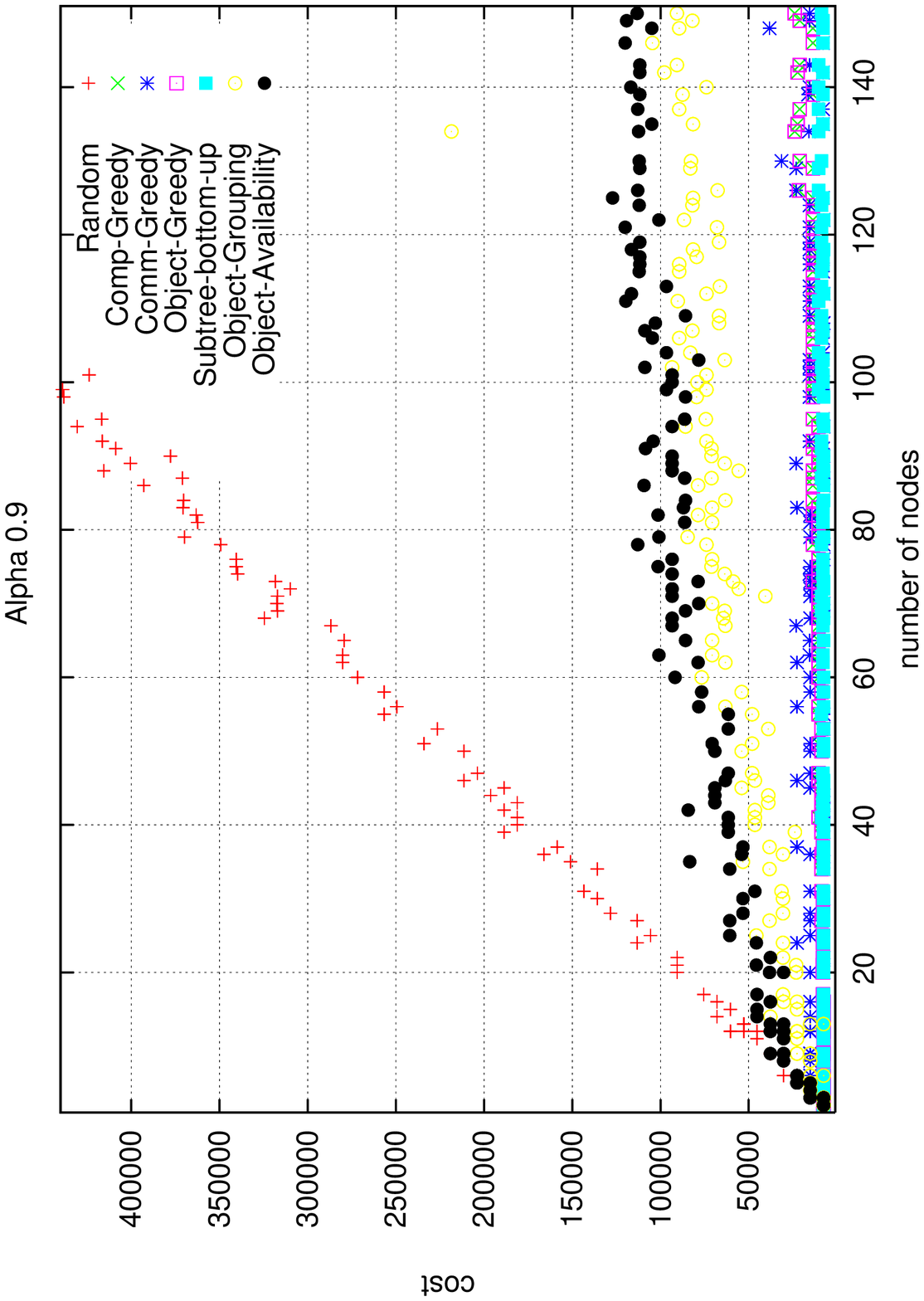}
     \label{fig.exp1-0.5}
   }$\quad$
   \subfigure[$\alpha = 1.7$.]{
     \includegraphics[angle=270, width=0.45\textwidth]{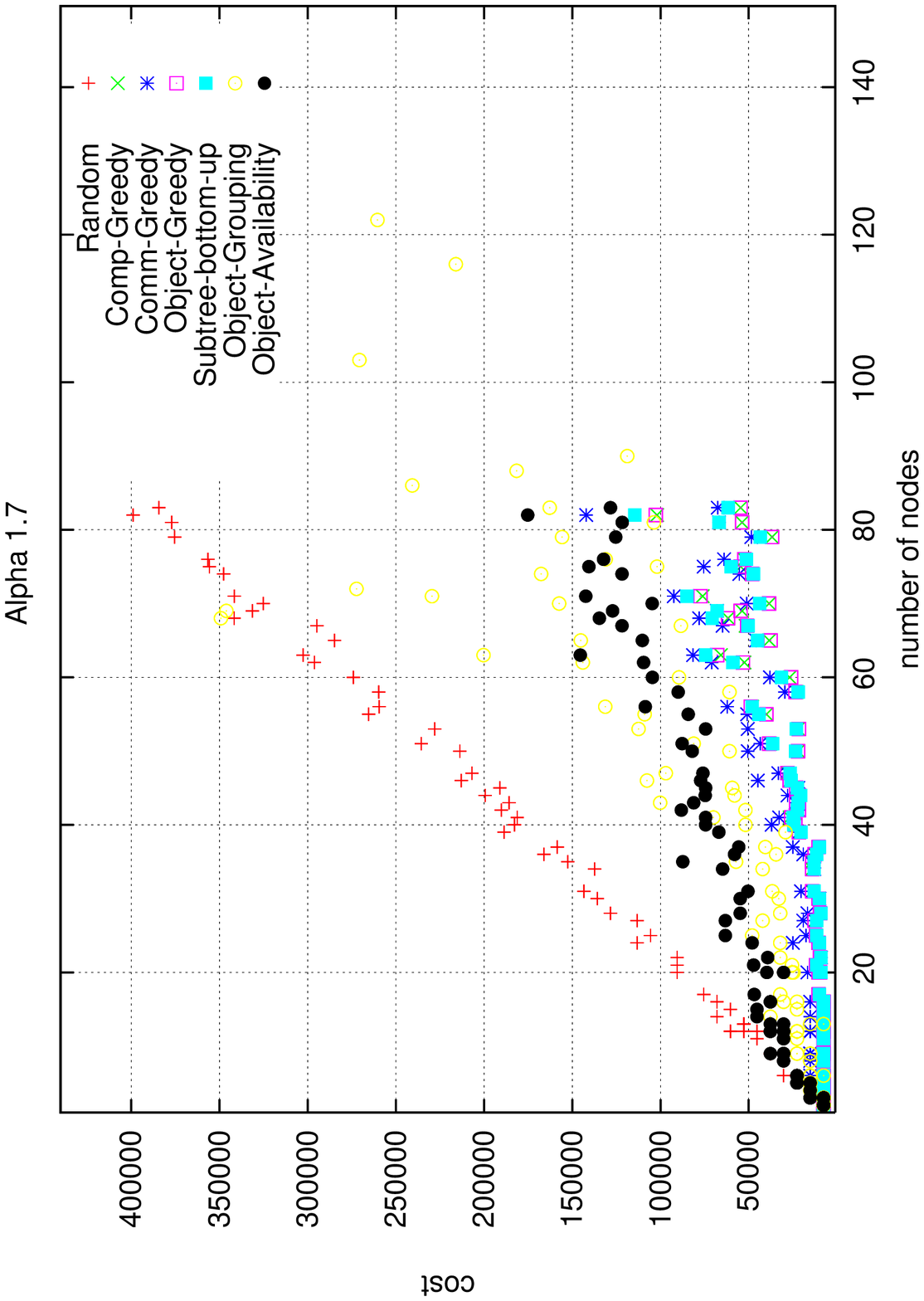}
     \label{fig.exp1-1.7}
   }
%\vspace{-0.2cm}
   \caption{Simulation with small basic objects and big download
   rates, increasing number of operators.}
\label{fig.exp1}
\end{figure}

Figure~\ref{fig.exp2} shows the comparison of the heuristics when
 $N$ is fixed and the computation factor $\alpha$ increases. This
 experiment uses the same parameters as the previous one. Up to a threshold
 the $\alpha$ parameter has no influence on the heuristics' performance
 and the solution cost is linear.  When $\alpha$ reaches the threshold, the
 solution cost of each heuristic increases until $\alpha$ exceeds a second
 threshold and no solution can be found anymore. Depending on the number
 of operators both thresholds have lower or higher values. In the case of
 small operator trees with only 20 nodes (see Figure~\ref{fig.exp2-20}),
 the first threshold is for $\alpha=1.7$ and the second at $\alpha = 2.2$
 (vs. $\alpha = 1.6$ and $\alpha = 1.8$ for operator trees of size 60, as
 seen in Figure~\ref{fig.exp2-60}).  Subtree-bottom-up behaves in both
 cases the best, whereas Random performs the poorest. Object-Grouping and
 Object-Availability change their position in the ranking: for small trees
 Object-Grouping behaves better, while for bigger trees it is outperformed
 by Object-Availability. The Greedy heuristics are between
 Subtree-bottom-up and the object sensitive heuristics. When $\alpha$ is
 larger, they at times outperform Subtree-bottom-up.

\begin{figure}
   \centering
    %\vspace{-0.5cm}
   \subfigure[$N = 20$.]{
     \includegraphics[angle=270, width=0.45\textwidth]{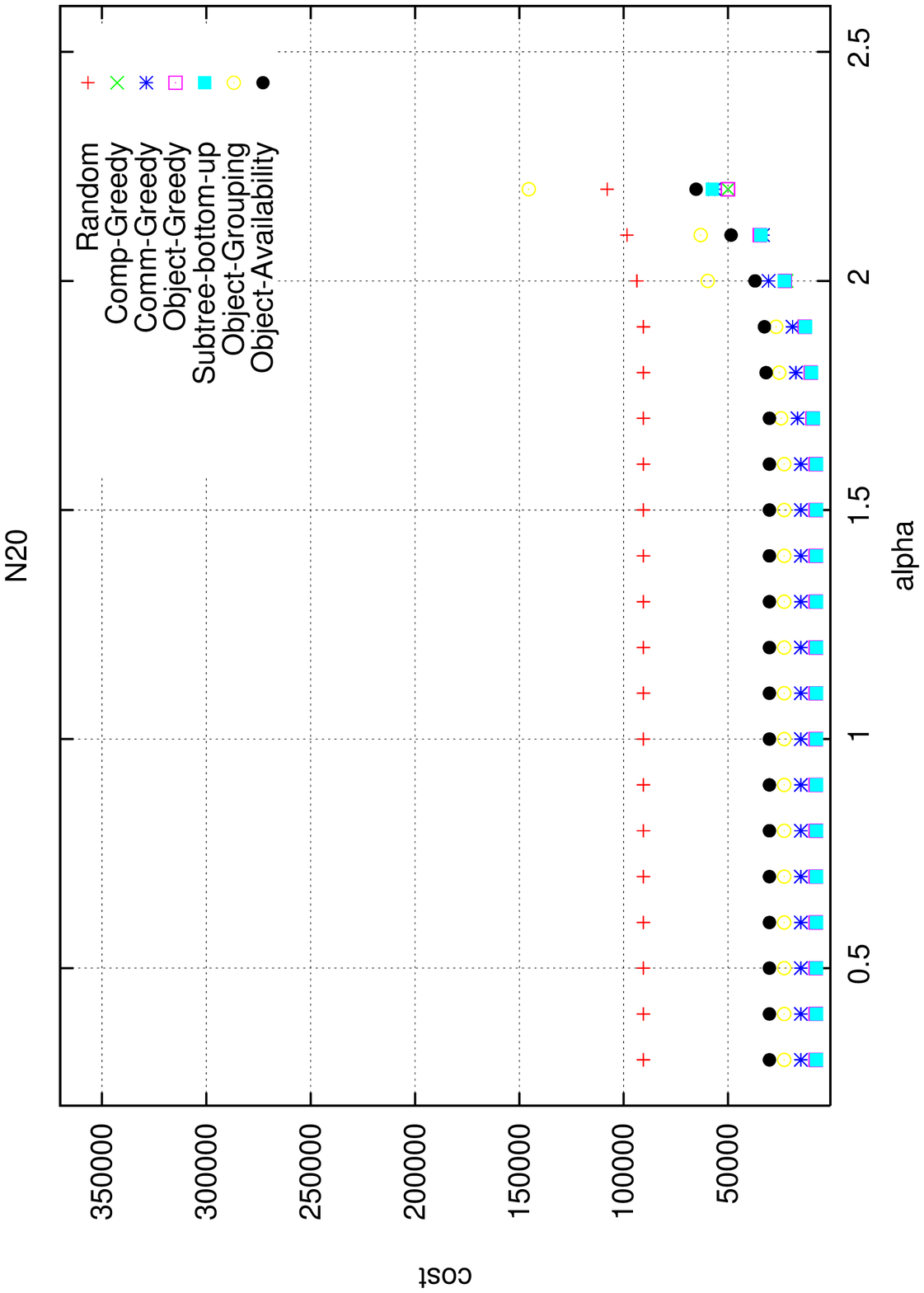}
     \label{fig.exp2-20}
   }$\quad$
   \subfigure[$N = 60$.]{
     \includegraphics[angle=270, width=0.45\textwidth]{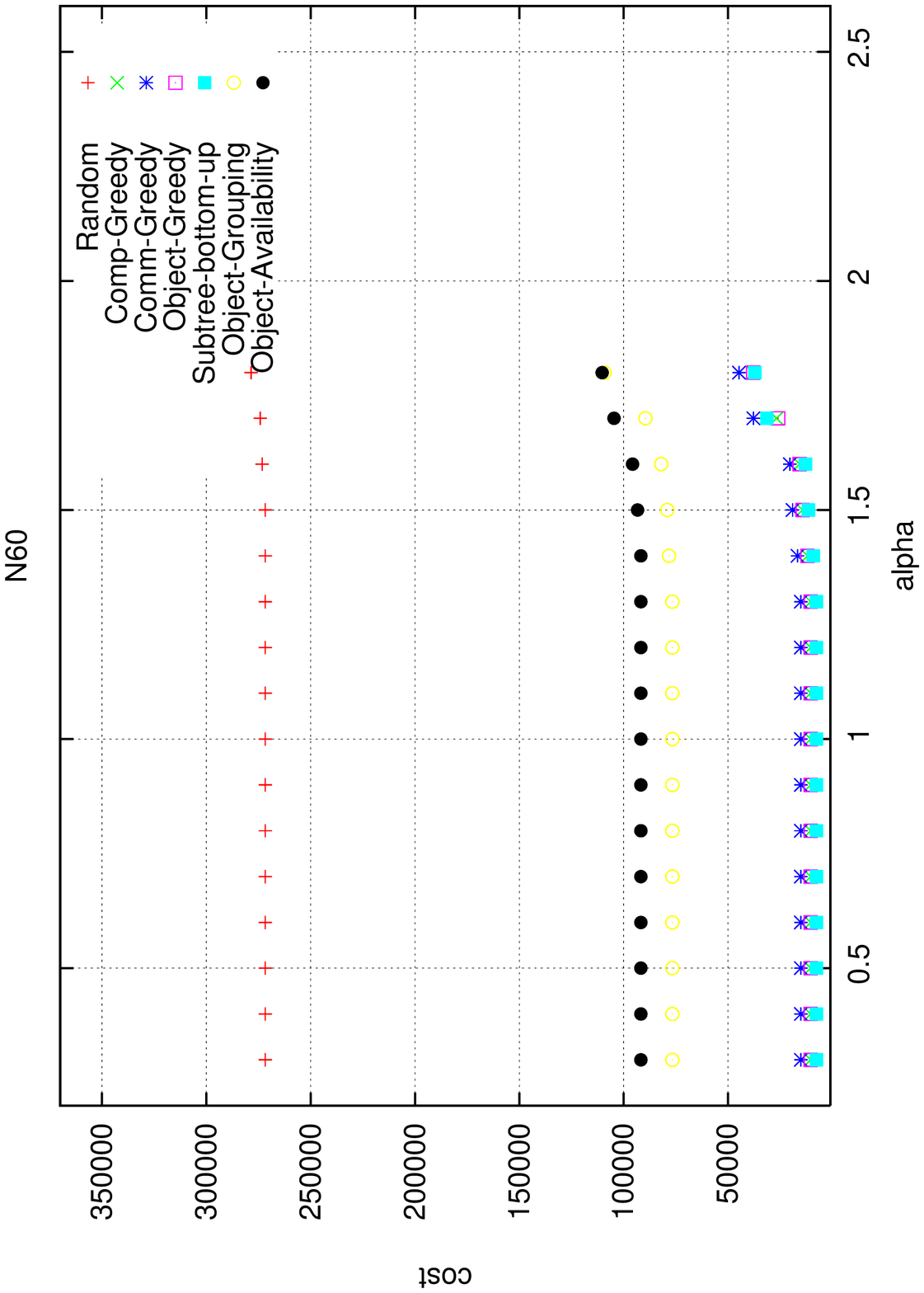}
     \label{fig.exp2-60}
   }
%\vspace{-0.2cm}
   \caption{Simulation with small basic objects and big download
   rates, increasing $\alpha$.}
\label{fig.exp2}
\end{figure}

\paragraph{High download rates - big object sizes}
The second set of experiments analyzes the heuristics' performance under
high download rates and big object sizes (450-530MB). As for small object
sizes, we plot two types of figures. Figure~\ref{fig.exp5} shows results
for a fixed $\alpha$ and increasing number of operators. We see that for
trees bigger than 45 nodes, almost no feasible solution can be found, both
for $\alpha$ smaller than 1 and higher than 1. In general,
Subtree-bottom-up still achieves the best costs, but at times it is
outperformed by Comm-Greedy. Subtree-bottom-up even fails in two cases in
which other heuristics find a solution: see Figure~\ref{fig.exp5-0.5}, N=41
and N=42. This behavior can be explained as follows.  The Subtree-bottom-up
routine achieves the best result in terms of processors that have to be
purchased. But unfortunately this operator-processor-mapping fails during
the server allocation process. (Often the bandwidth of 1 GB between
processor and server is not sufficient).

Comm-Greedy achieves in this experiment the best costs among the Greedy
heuristics, whereas Random, Object-Availability and Object-Grouping still
perform the poorest.

\begin{figure}
   \centering
    %\vspace{-0.5cm}
   \subfigure[$\alpha = 0.9$.]{
     \includegraphics[angle=270, width=0.45\textwidth]{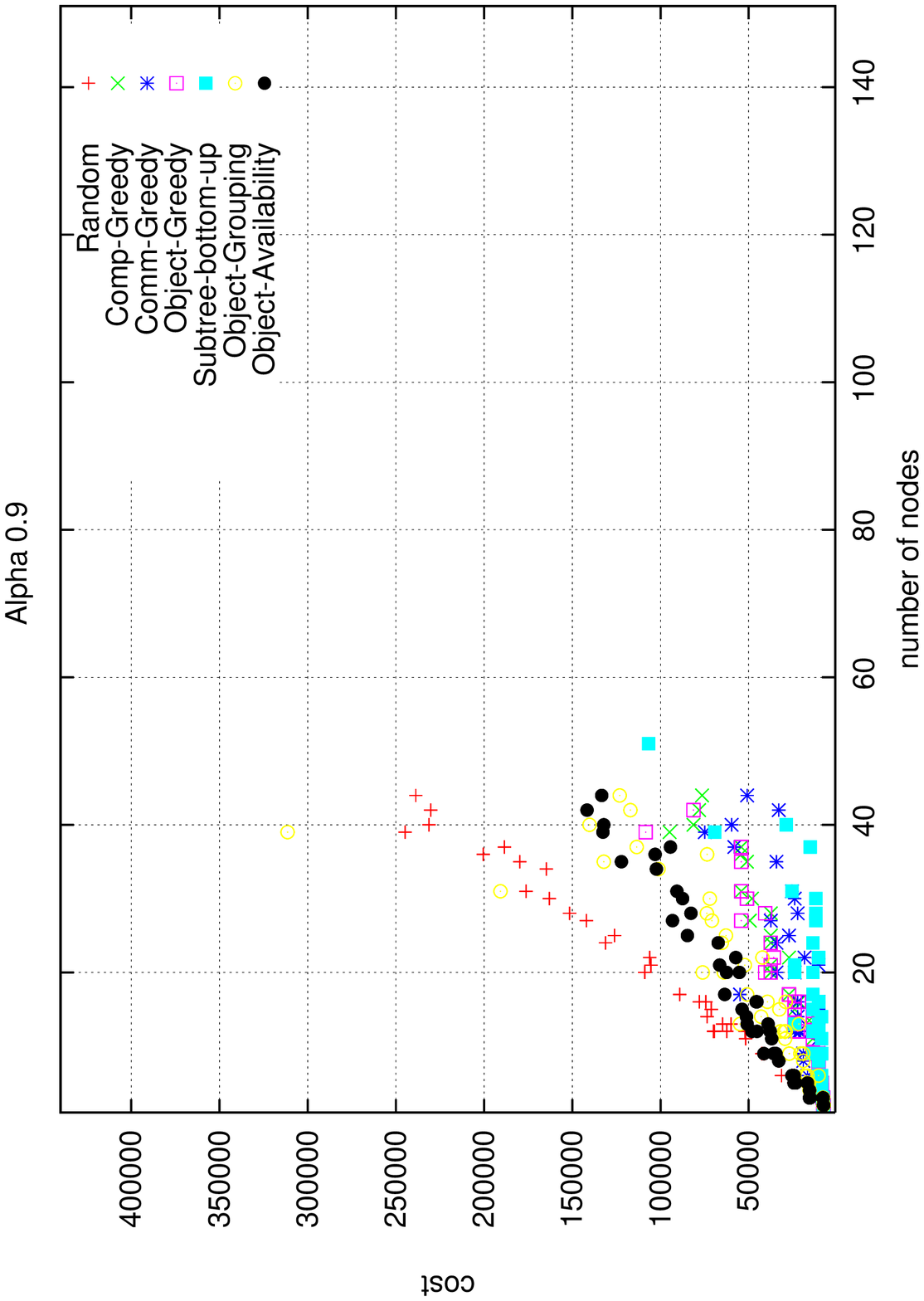}
     \label{fig.exp5-0.5}
   }$\quad$
   \subfigure[$\alpha = 1.1$.]{
     \includegraphics[angle=270, width=0.45\textwidth]{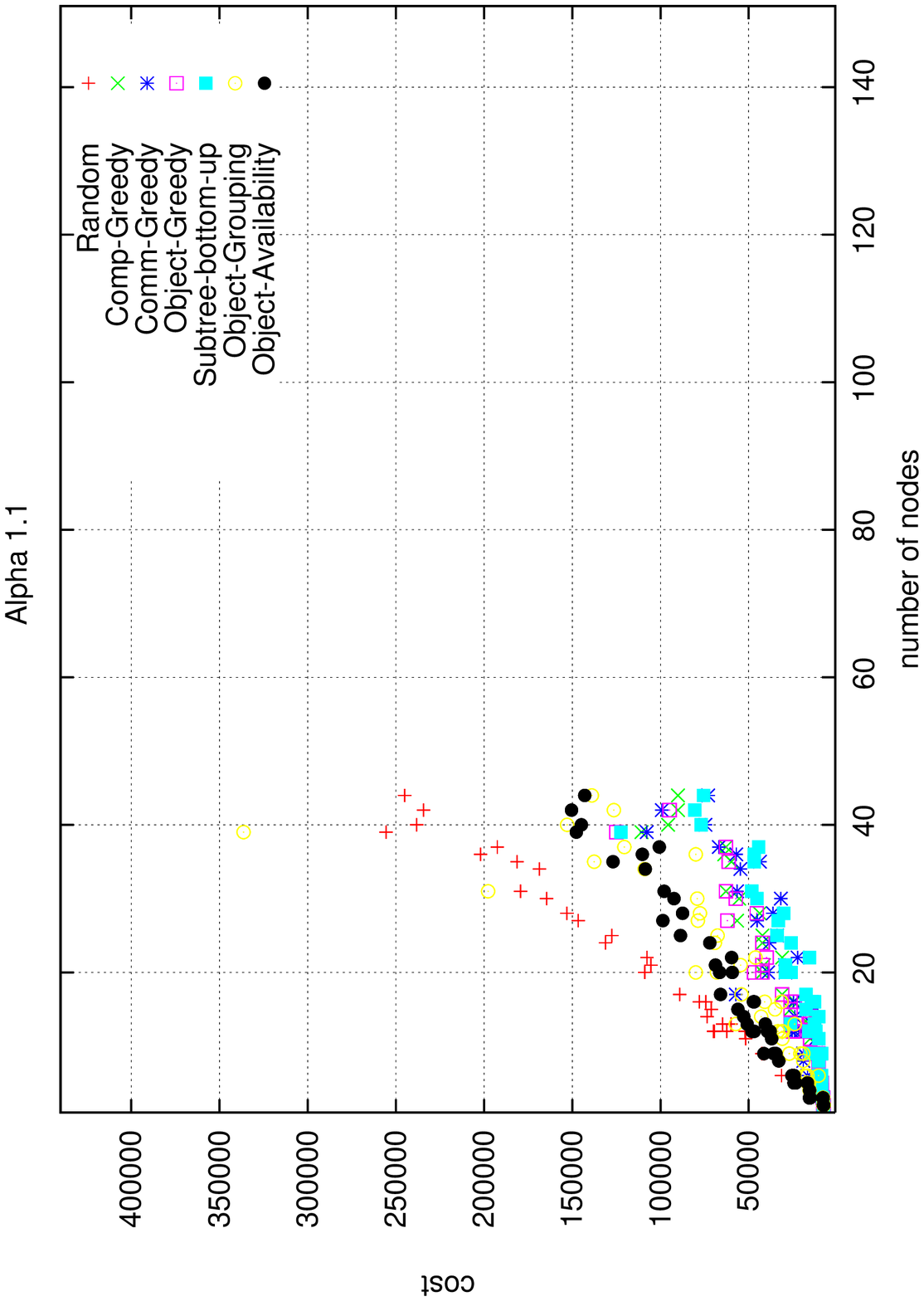}
     \label{fig.exp5-1.7}
   }
%\vspace{-0.2cm}
   \caption{Simulation with big basic objects and high download
   rates, increasing number of operators.}
\label{fig.exp5}
\end{figure}

When $N$ is fixed we observe a behavior similar as that for small object
sizes.  The ranking (Subtree-bottom-up, Greedy, object sensitive, and
finally Random) remains unchanged. When $N=20$, Comp-Greedy outperforms
Object-Greedy and Comm-Greedy finds a feasible
solution only once (see Figure~\ref{fig.exp6-20}). Object-Availability achieves
better results than Object-Grouping.

In the case of $N=40$ (see Figure~\ref{fig.exp6-40}), the ranking is
unchanged but for the fact that Object-Availability and Object-Grouping are
swapped. Also, in this case, Object-Greedy never succeeds to find a
feasible solution, whereas Comm-Greedy achieves the second best results.

Note that the failure of Object-Greedy depends on the
tree structure, and thus our results do not mean that
Object-Greedy fails for all
tree sizes higher than 20. Here once again, the solution found by the
heuristic for the operator mapping leads to the failure in the server
association process.
%
%We conclude that a larger number of operators lead earlier to a general
%failure of the heuristics when $\alpha$ increases.

\begin{figure}
   \centering
    %\vspace{-0.5cm}
   \subfigure[$N = 20$.]{
     \includegraphics[angle=270, width=0.45\textwidth]{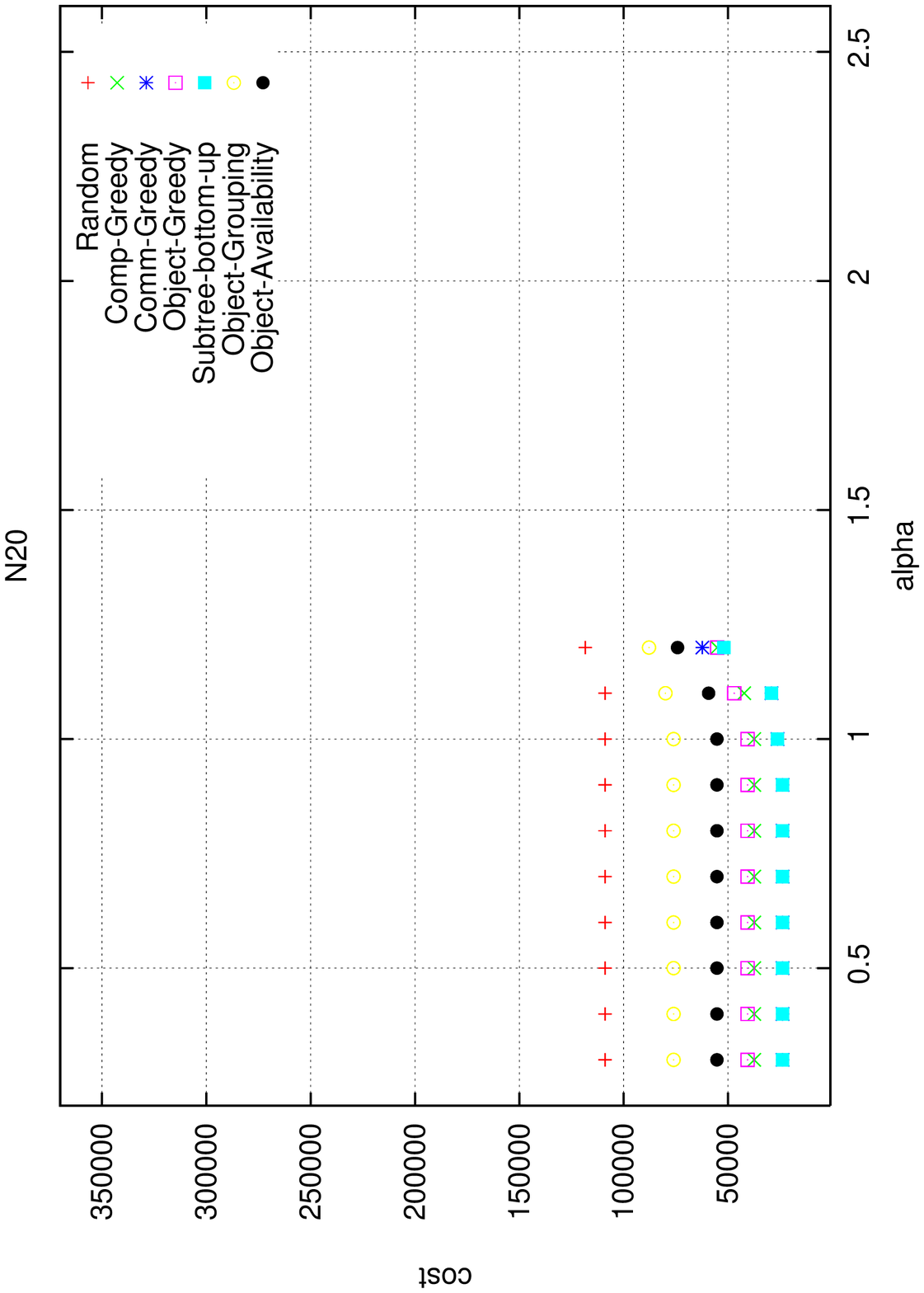}
     \label{fig.exp6-20}
   }$\quad$
   \subfigure[$N = 40$.]{
     \includegraphics[angle=270, width=0.45\textwidth]{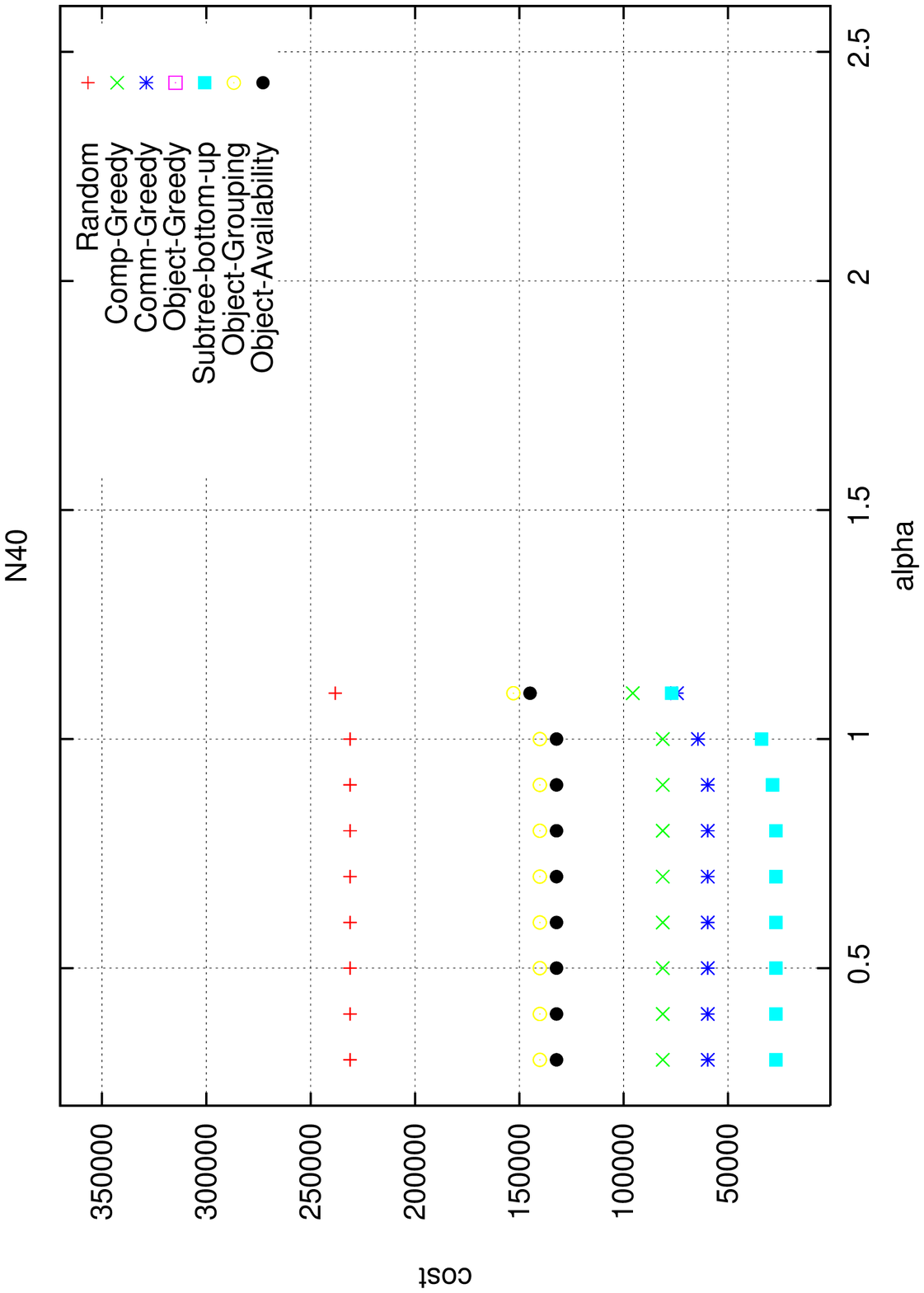}
     \label{fig.exp6-40}
   }
%\vspace{-0.2cm}
   \caption{Simulation with big basic objects and high download
   rates, increasing $\alpha$.}
\label{fig.exp6}
\end{figure}

\paragraph{Low download rates - small object sizes}
The behavior of the heuristics when download rates are low, i.e.,
frequency = 1/50s, is almost the same as for high download rates. In
general the heuristics lead to the same operator mapping, but in some cases
the purchased processors have less powerful network cards
(Cf.~Table~\ref{tab.exp3}).

\begin{center}
\begin{table}[Ht]
\caption{Influence of the download rate on the platform cost, in \$, when object sizes are small.}
\label{tab.exp3}
\begin{center}
\begin{tabular}{|c|c|c|c||c|c|c|}
\hline
& \multicolumn{3}{|c||}{small object sizes} & \multicolumn{3}{|c|}{big
object sizes}\\
\hline
N  & Comm-Greedy & Obj-Greedy & Subtree-b-up & Comm-Greedy &
Obj-Greedy & Subtree-b-up\\
\hline
\hline
115    &   \textbf{7947}  &  13547 &   8745  &   \textbf{7548}   & 13547   & 8745  \\
116    &  \textbf{15495}  &  13547 &   \textbf{7947}  &  \textbf{15096}   & 13547   & \textbf{7548}  \\
117    &   \textbf{7947}  &  13547 &   \textbf{7947}  &   \textbf{7548}   & 13547   & \textbf{7548}  \\
118    &  \textbf{15495}  &  13547 &   7548  &  \textbf{15096}   & 13547   & 7548  \\
119    &  \textbf{15495}  &  13547 &   8745  &  \textbf{15096}   & 13547   & 8745  \\

\hline
\end{tabular}
\end{center}
\end{table}
\end{center}

%% \begin{figure}
%%    \centering
%%     %\vspace{-0.5cm}
%%    \subfigure[$frequency = 1/50s$.]{
%%      \includegraphics[angle=270, width=0.45\textwidth]{fig/exp1-0-9.eps}
%%      \label{fig.exp1-0-9}
%%    }$\quad$
%%    \subfigure[$frequency = 1/2s$.]{
%%      \includegraphics[angle=270, width=0.45\textwidth]{fig/}
%%      \label{fig.exp3-0.9}
%%    }
%% %\vspace{-0.2cm}
%%    \caption{Comparison between the simulations with small basic
%%    objects and different download rates, $\alpha = 0.9$.}
%% \label{fig.exp3}
%% \end{figure}

\paragraph{Low download rates - big object sizes}
In this case, low download rates slightly improve the success rate of the
heuristics (see Figure~\ref{fig.exp7}). Indeed, because of the lower
download rates the links between servers and processors are less congested,
and hence the server association is feasible in more scenarios.

\begin{figure}
   \centering
    %\vspace{-0.5cm}
   \subfigure[$frequency = 1/50s$.]{
     \includegraphics[angle=270, width=0.45\textwidth]{exp5-0-9.eps}
     \label{fig.exp5-0-9}
   }$\quad$
   \subfigure[$frequency = 1/2s$.]{
     \includegraphics[angle=270, width=0.45\textwidth]{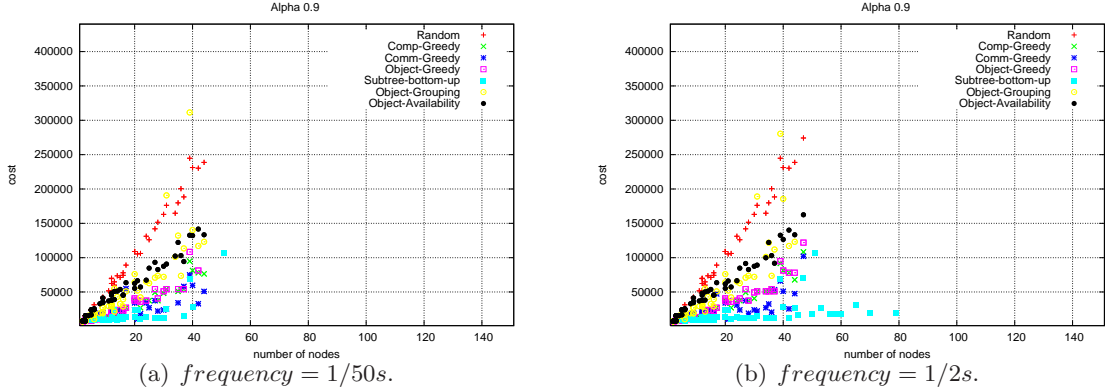}
     \label{fig.exp7-0.9}
   }
%\vspace{-0.2cm}
   \caption{Comparison between the simulations with big basic
   objects and different download rates, $\alpha = 0.9$.}
\label{fig.exp7}
\end{figure}

\paragraph{Influence of download rates (frequency) on the solution}
The third set of experiments studies the influence of download rates
on the solution. Remember the download rate of a basic object $k$ is
computed by $rate_k = frequency \times \delta_k$.
A first results is that frequencies smaller than $1/10s$ has no
further influence on the solution. All heuristics will find the same
solutions for a fixed operator tree, as seen in Figure~\ref{fig.exp9}.
For frequencies between $1/2s$ and $1/10s$, the solution
cost changes. In general the cost decreases, but for $N=160$ the cost
for the Object-Grouping heuristic increases.
Furthermore, the heuristic ranking remains:
Subtree-bottom-up, followed by the Greedy family, followed by the
object sensitive ones, and Random forms the bottom of the
league. Interestingly, the costs of Object-Availability decrease with
the number of operators. In this case the number of operators
that need to download a basic object increases, and hence the
privileged treatment of basic objects in order of availability on
servers becomes more important (compare
Figure~\ref{fig.exp9} and Figure~\ref{fig.exp9-100}).
\begin{figure}
   \centering
    %\vspace{-0.5cm}
   \subfigure[$N = 140$.]{
    \includegraphics[angle=270, width=0.45\textwidth]{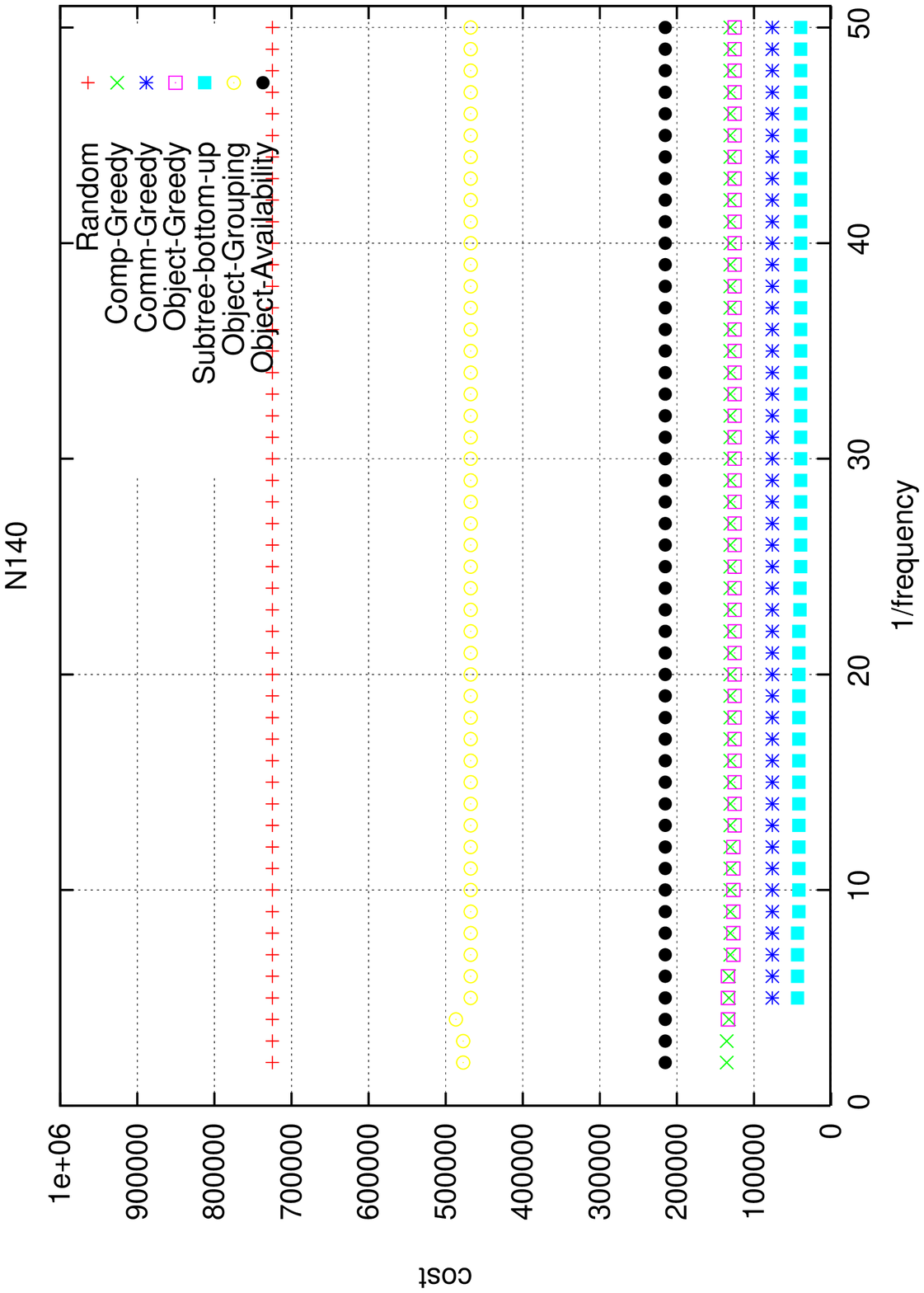}
     \label{fig.exp9-140}
   }$\quad$
   \subfigure[$N = 160$.]{
     \includegraphics[angle=270, width=0.45\textwidth]{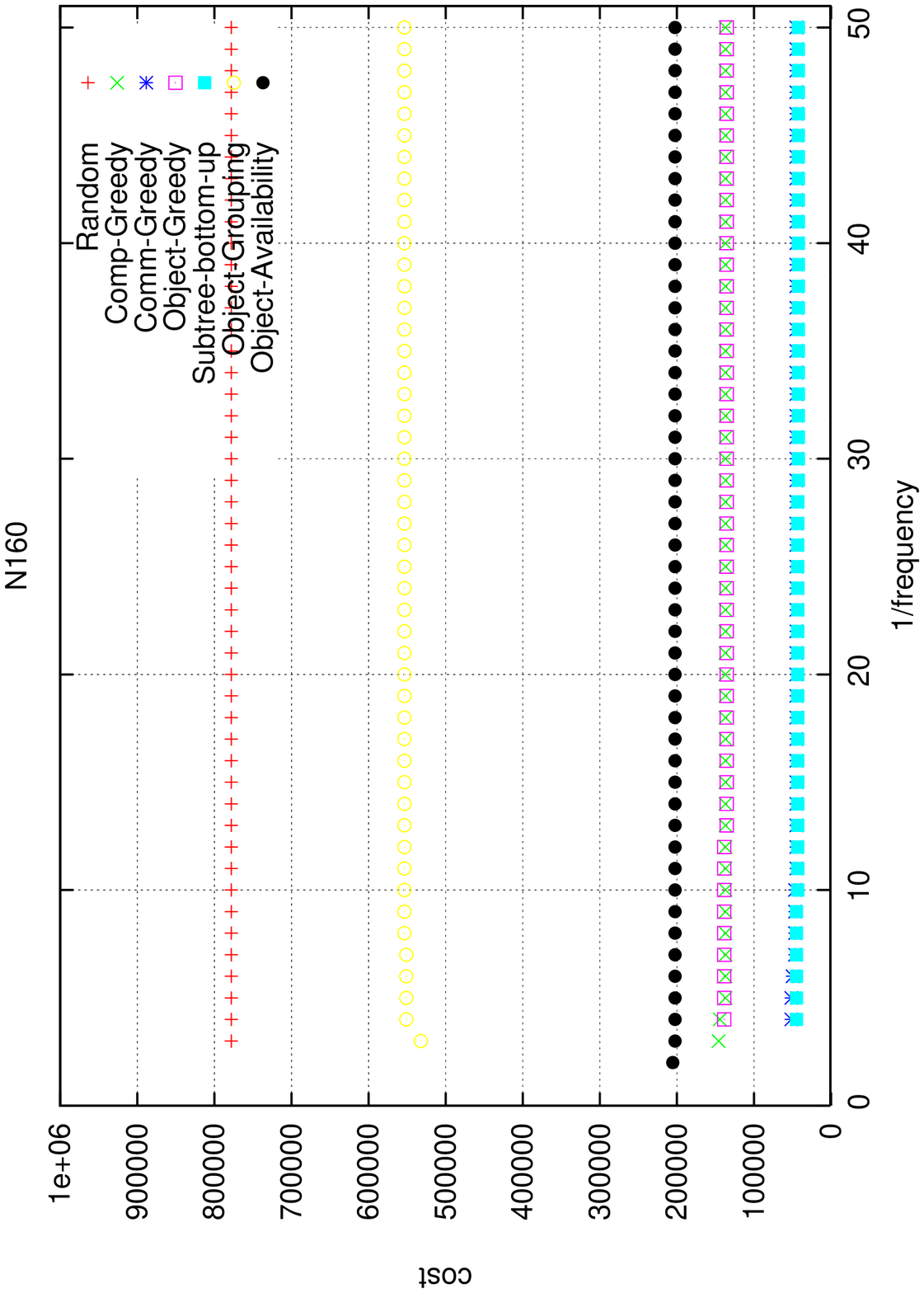}
     \label{fig.exp9-160}
   }
%\vspace{-0.2cm}
   \caption{Simulation to evaluate the influence of the frequency that determines
   download rates.}
\label{fig.exp9}
\end{figure}

We also tested the importance of the number of basic object
replications on the servers. Initially we ran
experiments also on different server configurations, with basic
objects either not replicated are replicated on all servers.
However, we did not observe a significant difference in the results
across different server configurations. We thus present results
only for are default server configuration.
Figure~\ref{fig.exp9} shows results for decreasing
frequencies, when each basic object is available only on a single
server. Comparing this plot to Figure~\ref{fig.exp9-50-100}, for
which each basic object is available on 50\% of the servers, one notices
no significant difference. Focusing solely on
frequencies between $1/2s$ and $1/10s$, we see that
Subtree-bottom-up, Comm-Greedy, Object-Grouping, and Object-Greedy find
more solutions, at frequencies for which they failed before (Figure~\ref{fig.exp9-100}). We
conclude that the level of replication of basic objects on servers may
matter for application trees with specific structures and
download frequencies, but that in general we can consider that this
parameter has little or no effect on the performance of the heuristics.
\begin{figure}
   \centering
    %\vspace{-0.5cm}
   \subfigure[Each basic object available only on one server.]{
    \includegraphics[angle=270, width=0.45\textwidth]{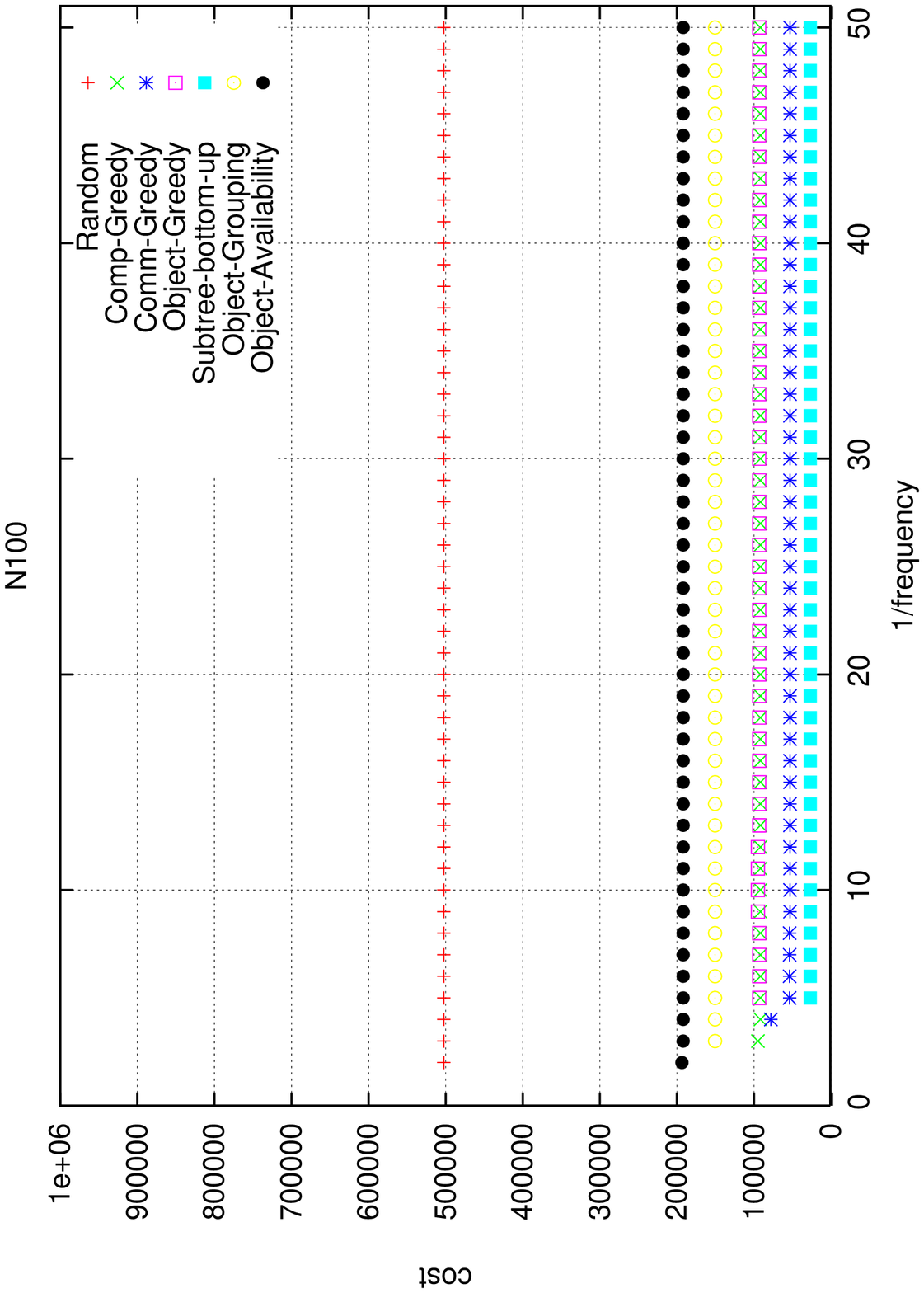}
     \label{fig.exp9-100}
   }$\quad$
   \subfigure[Each basic object available on 50\% of the servers.]{
     \includegraphics[angle=270, width=0.45\textwidth]{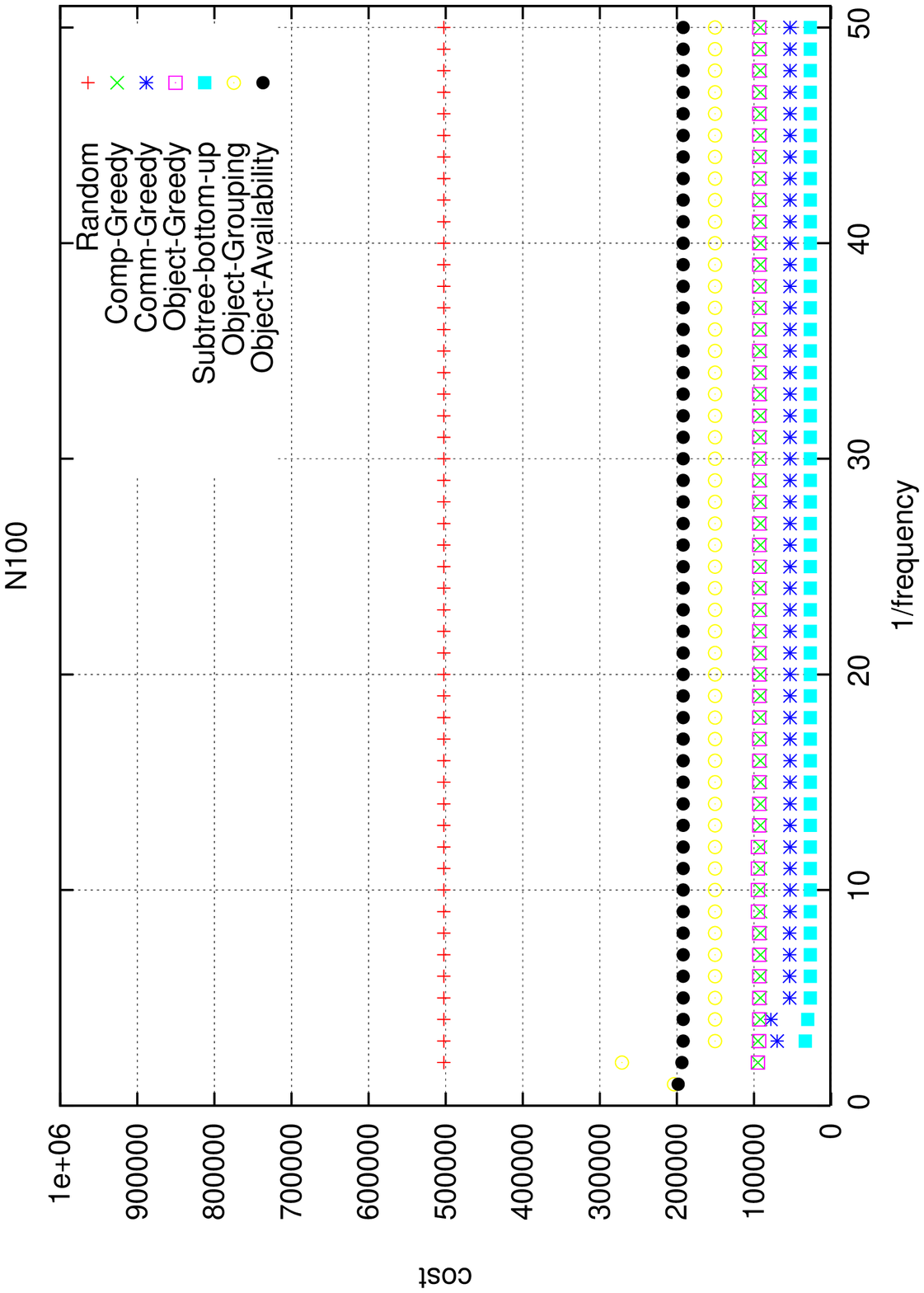}
     \label{fig.exp9-50-100}
   }
%\vspace{-0.2cm}
   \caption{Simulation to evaluate the influence of the availability of
   the basic objects.}
\label{fig.exp9obj}
\end{figure}

\paragraph{Comparison of the heuristics to a LP solution on a
   homogeneous platform}

This last set of experiments is dedicated to the evaluation of our
heuristics via a lower bound given by the solution of our integer
linear program.
We use Cplex 11 to solve our linear program. Unfortunately, the
LP is so enormous that, even when using only 5 possible groups of
processors and using tress with 30 operators, the LP file could not be
opened in Cplex. For trees with 20 operators, Cplex produces the optimal
solution, which consists in all cases in buying a single
processor. So we opted for evaluating our heuristics vs. the optimal
solution under homogeneous conditions, i.e., when there is
a single processor type. In this case we
skip the downgrade step after the server allocation.
When $\alpha$ is less than 1, Subtree-bottom-up almost always finds the
optimal solution (see Figure~\ref{fig.exp11-0.9}). Note that once
again, in two cases this heuristic is not able to find a feasible solution,
while the others succeed ($N\in\{34,35,36\}$). This is again due to the
fact that Subtree-bottom-up maps all operators onto a single processor and
then the server association process fails. The other heuristics buy
more processors from the onset, and are later able to find a feasible
processor-server association.

Even with homogeneous conditions, we observe the same ranking of our
heuristics as before: Subtree-bottom up, the Greedy family, followed
by Object-Grouping, then Object-Availability and finally
Random. Focusing on the Greedy family, we observe that with increasing
operator trees, Comp-Greedy outperforms Object-Greedy, and in most
cases Comm-Greedy achieves the best costs of the three.

\begin{figure}
   \centering
    %\vspace{-0.5cm}
   \subfigure[$\alpha = 0.9$.]{
    \includegraphics[angle=270, width=0.45\textwidth]{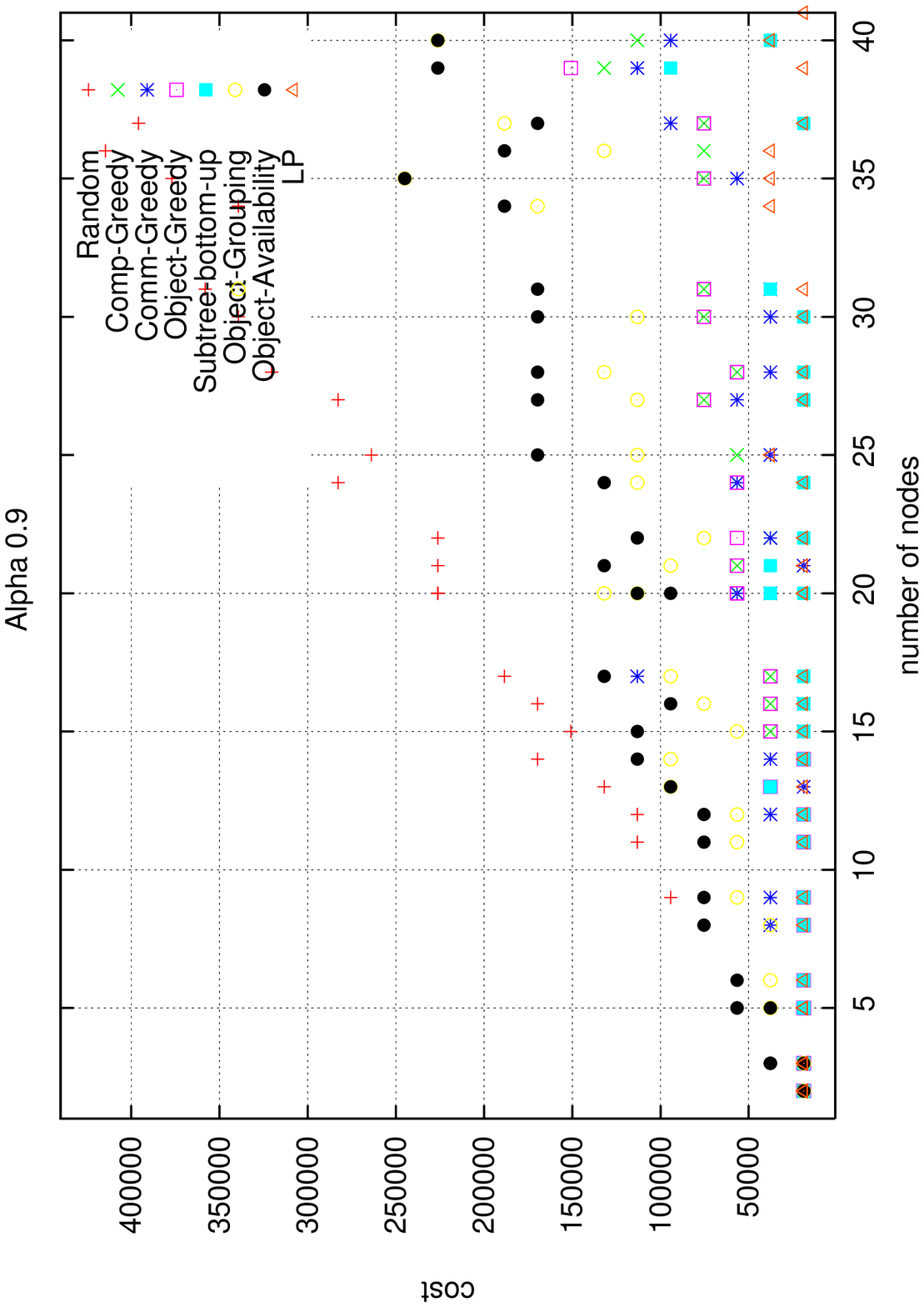}
     \label{fig.exp11-0.9}
   }$\quad$
   \subfigure[$\alpha = 1.1$.]{
     \includegraphics[angle=270, width=0.45\textwidth]{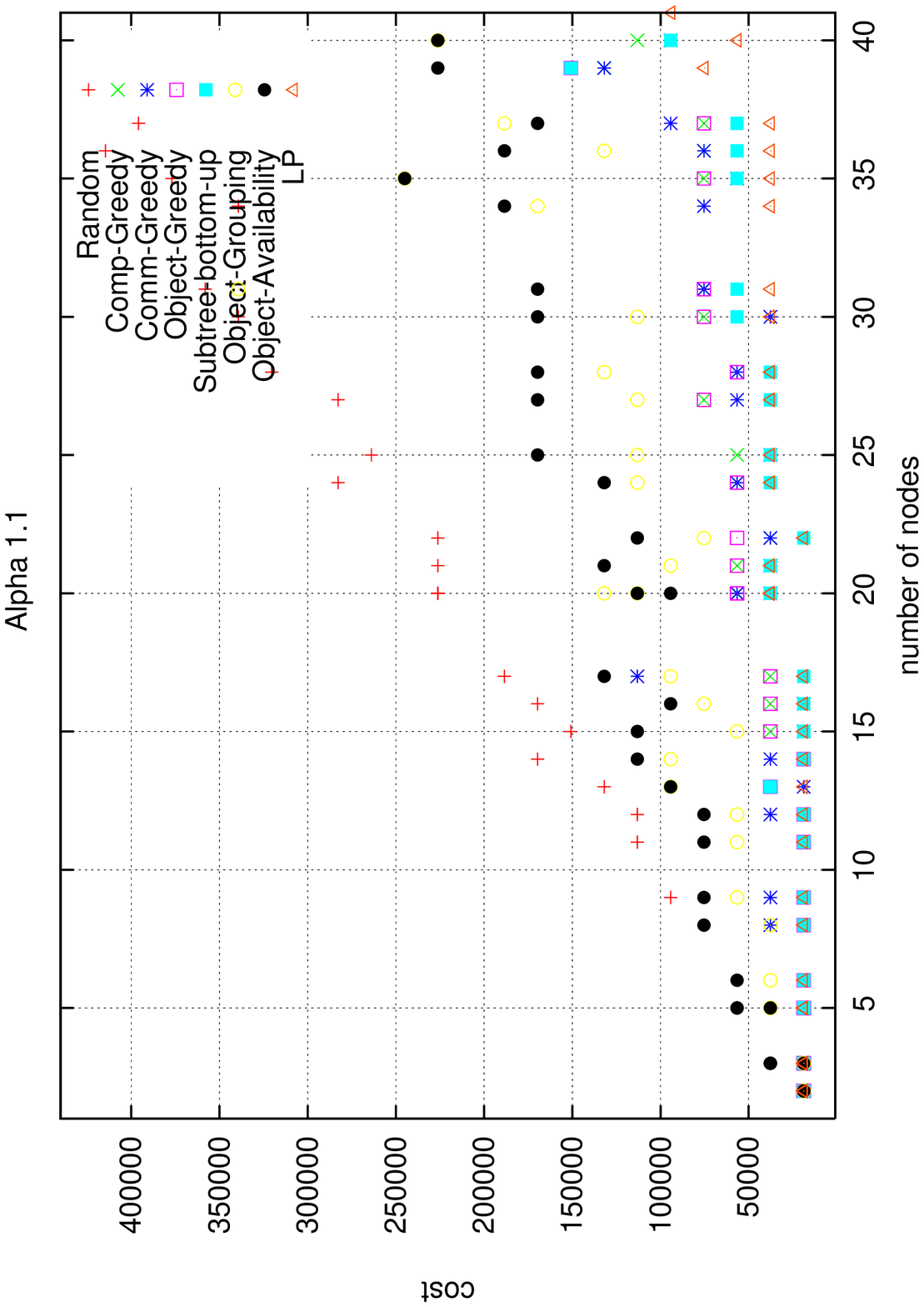}
     \label{fig.exp11-1.1}
   }
%\vspace{-0.2cm}
   \caption{Simulation to compare the heuristics' performances to the
   LP performance on homogeneous platforms.}
\label{fig.exp11}
\end{figure}
%\section{Related Work}
%\label{sec.related}

\paragraph{Summary}
Our results show that all our more sophisticated heuristics perform better
than the simple random approach.  Unfortunately, the object sensitive
heuristics, Object-Grouping and Object-Availability, do not show the
desired performance. We think that in some situations these heuristics
could lead to good performance, but this is not observed on our set of
random application configurations.  We had found that Subtree-bottom-up
outperforms other heuristics in most situations and also produces results
very close to the optimal (for the cases in which we were able to determine
the optimal).  There are some configurations for which Subtree-bottom-up
fails, our results  suggest that on should use one of our Greedy
heuristics, which perform reasonably well.

\section{Conclusion}
\label{sec.conclusion}

In this paper we have studied the problem of resource allocation for
in-network stream processing. We formalized several operator-placement
problems.  We have focused more particularly on a ``constructive'' scenario
in which one aims at minimizing the cost of a platform that satisfies an
application throughput requirement.  The complexity analysis showed that all
problems are NP-complete, even for the simpler cases.  We have derived an
integer linear programming formulation of the various problems, and we have
proposed several polynomial time heuristics for the constructive scenario.
We compared these heuristics through simulation, allowing us to identify
one heuristic that is almost always better than the others,
Subtree-bottom-up.  Finally, we assessed the absolute performance of our
heuristics with respect to the optimal solution of the linear program for
homogeneous platforms and small problem instances. It turns out that the
Subtree-bottom-up heuristic almost always produces optimal results.

An interesting direction for future work is the study of the case when
multiple applications must be executed simultaneously so that a given
throughput must be achieved for each application. In this case a clear
opportunity for higher performance with a reduced cost is the reuse of
common sub-expression between trees~\cite{Pandit06, Munagala_PODS2007}.
Another direction is the study of applications that are mutable, i.e.,
whose operators can be rearranged based on operator associativity and
commutativity rules. Such situations arise for instance in relational
database applications~\cite{chen02design}.

\bibliographystyle{IEEEtran}
\bibliography{biblio}
\end{document}